\title{Synchronized CTL over One-Counter Automata} 
\author{Shaull Almagor}{Department of Computer Science, Technion, Israel \and \url{https://shaull.cswp.cs.technion.ac.il/}} {shaull@technion.ac.il}{https://orcid.org/0000-0001-9021-1175}{supported by the Israel Science Foundation grant 989/22}
\author{Daniel Assa}{Reichman University, Herzliya, Israel}{daniel.assa@post.runi.ac.il}{}{}
\author{Udi Boker}{Reichman University, Herzliya, Israel \and \url{https://faculty.runi.ac.il/udiboker/}}{udiboker@runi.ac.il}{https://orcid.org/0000-0003-4322-8892}{supported by the Israel Science Foundation grant 2410/22}
\authorrunning{S. Almagor, D. Assa, and U. Boker} 
\keywords{CTL, Synchronization, One Counter Automata, Model Checking} 
\newcommand{\tuple}[1]{\langle #1  \rangle}
\newcommand{\True}{\mathtt{true}}
\newcommand{\False}{\mathtt{false}}
\newcommand{\lin}{\mathrm{Lin}}
\newcommand{\cA}{{\mathcal A}}
\newcommand{\cK}{{\mathcal K}}
\newcommand{\Nat}{\ensuremath{\mathbb{N}}\xspace}
\newcommand{\bbN}{\mathbb{N}}
\newcommand{\leads}[1]{\stackrel{#1}{\leadsto}}
\renewcommand{\vec}[1]{\boldsymbol{#1}}
\renewcommand{\phi}{\varphi}
\newcommand{\limplies}{\to} 
\newcommand{\CondFont}[1]{\texttt{#1}}
\newcommand{\eqZ}{\CondFont{=0}}
\newcommand{\gZ}{\CondFont{>0}}
\newcommand{\ZTs}{zero transitions\xspace}
\newcommand{\effect}[1]{\emph{effect}(#1)}
\newcommand{\shift}{\textsf{shift}} %
\newcommand{\Prev}[1]{{#1_{prev}}} %
\newcommand{\core}{\texttt{core}} 
\newcommand{\Const}[1]{\texttt{#1}} 
\newcommand{\cT}{\Const{cT}} 
\newcommand{\sT}{\Const{sT}} 
\renewcommand{\P}{\Const{P}} 
\newcommand{\period}{\mathtt{p}}
\newcommand{\thresh}{\mathtt{t}}
\newcommand{\lcm}{\mathrm{lcm}}
\newcommand{\Subject}[1]{\paragraph*{#1.}}
\newcommand{\CS}{CTL+Sync\xspace}
\newcommand{\CUA}{CTL+UA\xspace}
\newcommand{\IA}[1]{\cA^{#1}}
\newcommand{\es}{\Const{s}} 
\newcommand{\segStart}{\mathsection}
\newcommand{\complP}{{\mathsf{P}}}
\newcommand{\complNP}{{\mathsf{NP}}}
\newcommand{\complPSPACE}{{\mathsf{PSPACE}}}
\newcommand{\FullVersion}[1]{#1}
\newcommand{\CameraReady}[1]{}
\crefname{enumi}{}{}
\begin{document}
\maketitle
	
\begin{abstract}
We consider the model-checking problem of \emph{Synchronized Computation-Tree Logic} (CTL+Sync) over One-Counter Automata (OCAs).
CTL+Sync augments CTL with temporal operators that require several paths to satisfy properties in a synchronous manner, e.g., the property ``all paths should eventually see $p$ at the same time''. 
The model-checking problem for CTL+Sync over finite-state Kripke structures was shown to be in $\complP^{\complNP^\complNP}$. 
OCAs are labelled transition systems equipped with a non-negative counter that can be zero-tested. Thus, they induce infinite-state systems whose computation trees are not regular.
The model-checking problem for CTL over OCAs was shown to be $\complPSPACE$-complete. 

We show that the model-checking problem for CTL+Sync over OCAs is decidable. However, the upper bound we give is non-elementary. We therefore proceed to study the problem for a central fragment of CTL+Sync, extending CTL with operators that require \emph{all} paths to satisfy properties in a synchronous manner, and show that it is in $\mathsf{EXP}^\mathsf{NEXP}$ (and in particular in $\mathsf{EXPSPACE}$), by exhibiting a certain ``segmented periodicity'' in the computation trees of OCAs.

\end{abstract}
\newpage
\section{Introduction}
Branching-time model checking is a central avenue in formal verification, as it enables reasoning about multiple computations of the system with both an existential and universal quantification. As systems become richer, the classical paradigm of e.g., CTL model checking over finite-state systems becomes insufficient. To this end, researchers have proposed extensions both of the logics~\cite{axelsson2010extended,clarkson2010hyperproperties,clarkson2014temporal,BCHK14,CD16,ABK16} and of the systems~\cite{cook2017verifying,walukiewicz2000model,demri2010model,GL13}. 

In the systems' frontier, of particular interest are infinite-state models. Typically, such models can quickly lead to undecidability (e.g., two-counter machines~\cite{Min67}). However, some models can retain decidability while still having rich modelling power. One such model that has received a lot of attention in recent years is One Counter Automata (OCAs)~\cite{valiant1975deterministic,haase2009reachability} -- finite state machines equipped with a non-negative counter that can be zero-tested. 
Model checking CTL over OCAs was studied in~\cite{GL13}, where it was shown to be decidable in $\complPSPACE$. The main tool used there is the fact that despite the infinite configuration space, the computations of an OCA do admit some periodic behavior, which can be exploited to exhibit a small-model property for the satisfaction of Until formulas.

In the logics' frontier, a useful extension of CTL is that of CTL with Synchronization operators (\CS), introduced in~\cite{CD16}. \CS extends CTL with operators that express synchronization properties of computation trees. Specifically, two new operators are introduced: \emph{Until All} and \emph{Until Exists}. The former, denoted by $\psi_1 UA\psi_2$, holds in state $s$ if there is a uniform bound $k\in \bbN$ such that $\psi_2$ holds in all paths from $s$ after exactly $k$ steps, and $\psi_1$ holds in all paths up to step $k$. Thus, intuitively, it requires all the computations to synchronize the satisfaction of the Until operator. The latter, somewhat less natural operator, denoted by $\psi_1 UE\psi_2$, requires that there exists a uniform bound $k$ such that in every level $j<k$ of the computation tree, some path satisfies $\psi_1$ and can be continued to satisfy $\psi_2$ at level $k$. 

In comparison, the standard CTL operators $A\psi_1 U\psi_2$ and $E\psi_1 U\psi_2$ require that all paths/some path satisfy the Until formula, but there is no requirement that the bounds coincide. We illustrate the differences between the semantics in \cref{fig:CTL_Sync_Semantics}.
As discussed in~\cite{CD16}, \CS can describe non $\omega$-regular properties of trees, and hence goes beyond MSO, while retaining a decidable model-checking problem over finite Kripke structures.

\newcommand{\circBlack}{%
	\begin{tikzpicture}[inner sep=0pt,baseline=-0.8ex]%
		\fill[black] (0,0) circle [radius=4pt];
	\end{tikzpicture}%
}
\newcommand{\circWhite}{%
	\begin{tikzpicture}[inner sep=0pt,baseline=-0.8ex]%
		\fill[fill=black!10,draw=black] (0,0) circle [radius=4pt];
	\end{tikzpicture}%
}
\newcommand{\circStripes}{%
	\begin{tikzpicture}[inner sep=0pt,baseline=-0.8ex]%
		\draw[pattern=north east lines] (0,0) circle [radius=4pt];
	\end{tikzpicture}%
}

\begin{figure}[ht]
	\centering
	\begin{subfigure}{0.45\textwidth}
		\includegraphics[scale=0.5]{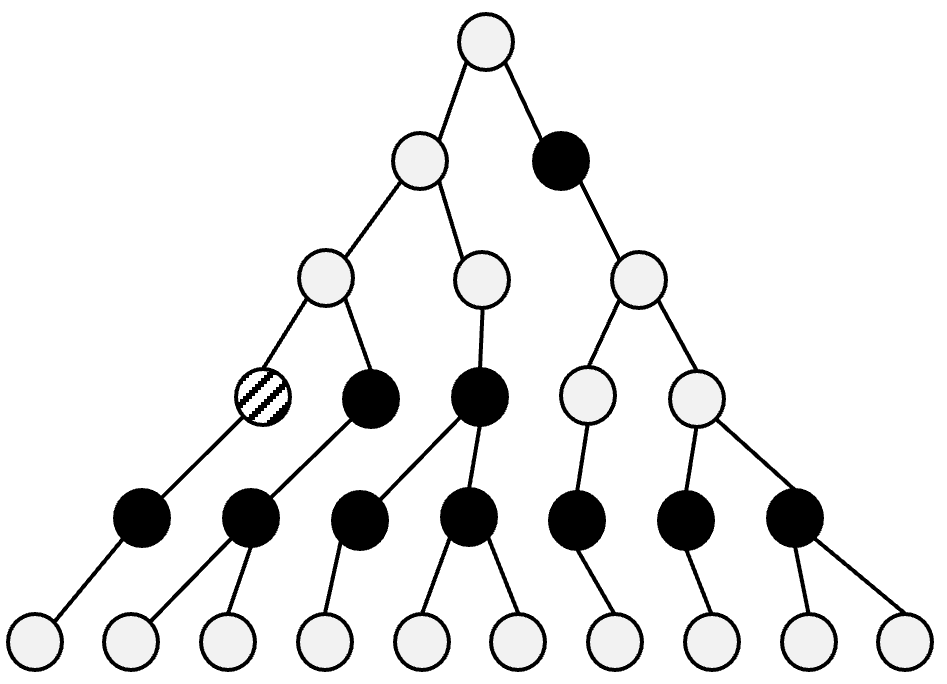}
		\caption{$\True\,U\!A\,\circBlack$ and $E\,\circWhite\,U\,\circStripes$ are satisfied.}
		\label{subfig:FA_EU}
	\end{subfigure}
	\begin{subfigure}{0.45\textwidth}     
		\includegraphics[scale=0.5]{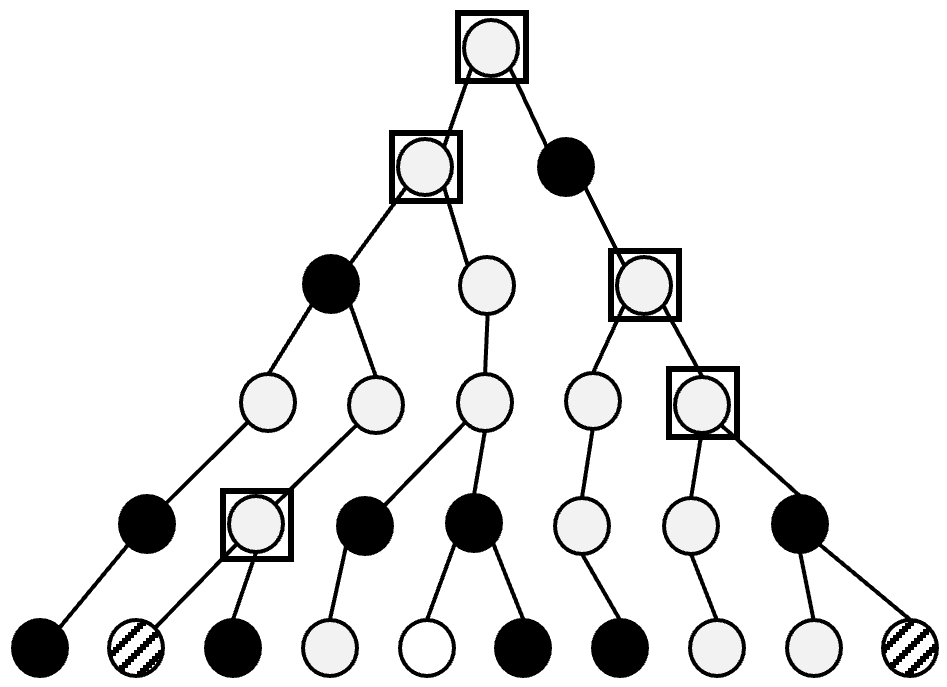}
		\caption{$A\,\True\,U\,\circBlack$ and $\circWhite\,U\!E\,\circStripes$ are satisfied.}
		\label{subfig:AF_UE}
	\end{subfigure}
	\caption{The computation tree in~\textbf{(\subref{subfig:AF_UE})} satisfies $A\,\True\,U\,\circBlack$, since every path eventually reaches $\circBlack$. However, it does not satisfy $\True\,U\!A\,\circBlack$, since this eventuality does not happen synchronously. In contrast, the tree in~\textbf{(\subref{subfig:FA_EU})} does satisfy $\True\,U\!A\,\circBlack$.\\
	The formula $E\,\circWhite\,U\,\circStripes$ is satisfied in \textbf{(\subref{subfig:FA_EU})} (along the left branch), but not in \textbf{(\subref{subfig:AF_UE})}, which only satisfies the weaker $\circWhite\,U\!E\,\circStripes$. The boxed nodes on each level show that $\circWhite$ holds along a path to $\circStripes$ at level $6$.
	}
	\label{fig:CTL_Sync_Semantics}
\end{figure}

In this work, we show the decidability of \CS model checking over OCAs: Given an OCA $\cA$ and a \CS formula $\phi$, the problem of whether $\cA$ satisfies $\phi$.
We thus combine the expressiveness of \CS with the rich modeling power of OCAs.

On the technical side, 
the approach taken in \cite{CD16} to solve model-checking of \CS over Kripke structures does not seem to be very useful in our case. The solution there is based on the observation that every two levels of the computation tree that share the same set of Kripke-states must also share the same satisfaction value to every \CS formula. Hence, in that case, the algorithm can follow the computation tree of the powerset of the given Kripke structure, and terminate when encountering a level that has the same set of states as some previous level.
In contrast, for OCAs, the unbounded counter prevents the ability to consider subsets of configurations.


On the other hand, the approach taken in \cite{GL13} to solve model-checking of OCAs with respect to CTL is indeed useful in our case. Specifically, the algorithm in~\cite{GL13} is based on an analysis of the periodic behavior of the set of counter values that satisfy a given CTL formula in a given state of the OCA.
We extend this approach, taking into account the additional complexity that stems from the synchronization requirements; see \cref{sec:DirectModelChecking}.

We start with establishing the decidability of \CS model checking using Presburger Arithmetic (see \cref{sec:Presburger}). This, however, yields a procedure with non-elementary runtime.

We then proceed to our main technical contribution (\cref{sec:DirectModelChecking}), providing an  algorithm for model checking the central fragment of \CS that extends CTL with the $UA$ operator, which requires \emph{all} paths to satisfy properties in a synchronous manner.
Its running time is in $\mathsf{EXP}^\mathsf{NEXP}$ (and in particular in $\mathsf{EXPSPACE}$), and for a fixed OCA and formulas of a fixed nesting depth, it is in $\complP^\complNP$ (and in particular in $\complPSPACE$).

 Since \CS makes assertions on the behavior of different paths at the same time step (namely the same level of the computation tree), we need to reason about which configurations occur at each level of the tree.
More precisely, in order to establish decidability we wish to exhibit a small-model property of the form \emph{if the computation tree from some configuration $(s,v)$, for a state $s$ and counter value $v$, satisfies the formula $\phi$, then the computation tree from some configuration $(s,v')$ for a small $v'$ satisfies $\phi$ as well}. 
Unfortunately, the computation trees of an OCA from two configurations $(s,v)$ and $(s,v')$ cannot be easily obtained from one another using simple pumping arguments, due to the zero tests. This is in contrast to the case where one does not care about the length of a path, as in~\cite{GL13}. 
To overcome this, we show that computation trees of an OCA $\cA$ can be split into several segments, polynomially many in the size of $\cA$, and that within each segment we can find a bounded prefix that is common to all trees after a certain counter threshold, and such that the remainder of the segment is periodic. Using this, we establish the small model property above. The toolbox used for proving this, apart from careful cycle analysis, includes 2TVASS -- a variant of 2VASS studied in~\cite{LS20} that allows for one counter of the 2VASS to be zero-tested.

We believe that this structural result (\cref{cl:Periodicity}) is of independent interest for reasoning about multiple traces in an OCA computation tree, when the length of paths plays a role.

\CameraReady{Due to lack of space, Proposition proofs are omitted; they appear in the arXiv version.}

\section{Preliminaries}\label{sec:Preliminaries} 
Let $\bbN=\{0,1,\ldots\}$ be the natural numbers. For a finite set $A\subseteq \bbN$ we denote by $\lcm(A)$ the least common multiple of the elements in $A$.

For a finite sequence $\xi=t_0 t_1 \cdots t_{h{-}1}$ and integers $x,y$, such that $0\leq x \leq y \leq h{-}1$, we write $\xi[x..y]$ for the infix of $\xi$ between positions $x$ and $y$, namely for $t_x t_{x+1} \cdots t_y$. We also use the parentheses `(' and `)'  for a non-inclusive range, e.g., $[x..y)=[x..y{-}1]$, and abbreviate $\xi[x..x]$ by $\xi[x]$.
We denote the length of $\xi$ by $|\xi|=h$.
\paragraph*{One Counter Automata}
A \emph{One Counter Automaton} (OCA) 
 is a triple $\cA=\tuple{S, \Delta, L}$, where $S$ is a finite set of states, $\Delta \subseteq (S \times \{\eqZ, \gZ \} \times \{0, +1, -1\} \times S)$ is a transition relation, 
 and $L:S\to AP$, for some finite set $AP$ of \emph{atomic propositions}, is the state labeling.

 A pair $(s,v)\in S\times\Nat$ is a \emph{configuration} of $\cA$. We write $(s,v)\to_t (s',v')$ for a transition $t\in \Delta$ if one of the following holds:
 \begin{itemize}
 	\item $t=(s,\eqZ,e,s')$, with $e\in \{0,+1\}$, $v=0$ and $v'=e$, or
 	\item $t=(s,\gZ,e,s')$, with $e\in \{0,+1,-1\}$, $v>0$ and $v'=v+e$.
 \end{itemize}
We write $(s,v)\to (s',v')$ if $(s,v)\to_t(s',v')$ for some $t\in \Delta$.

We require that $\Delta$ is total, in the sense that for every configuration $(s,v)$ we have $(s,v)\to (s',v')$ for some configuration $(s',v')$. Note that this is a syntactic requirement --- every state should have outgoing transitions on both $\eqZ$ and $\gZ$. 
This corresponds to the standard requirement of Kripke structures that there are no deadlocks. We denote by $|\cA|$ the number of states of $\cA$. Note that the description size of $\cA$ is therefore polynomial in $|\cA|$.

A \emph{path} of $\cA$ is a sequence of transitions $\tau=t_1,\ldots,t_k$ such that there exist states $s_0,\ldots,s_k$ where $t_i=(s_{i-1},\bowtie_i,e_i,s_i)$ with $\bowtie_i\in \{\eqZ,\gZ\}$ for all $1\le i\le k$. We say that $\tau$ is \emph{valid} from starting counter $v_0$ (or from configuration $(s_0,v_0)$), if there are counters $v_0,\ldots,v_k\in \bbN$ such that for all $1\le i<k$ we have 
$(s_{i-1},v_{i-1})\to_{t_i} (s_i,v_i)$.
We abuse notation and refer to the sequence of configurations also as a \emph{path}, starting in $(s_0,v_0)$ and ending in $(s_k,v_k)$. 
The \emph{length} of the path $\tau$ is $k$, and we define its $\effect{\pi}=\sum_{i=1}^k e_i$.

We also allow infinite paths, in which case there are no end configurations and the length is $\infty$. In this case we explicitly mention that the path is infinite.
We say that $\tau$ is \emph{balanced/positive/negative} if $\effect{\tau}$ is zero/positive/negative, respectively. 
It is a \emph{cycle} if $s_0=s_k$, and it \emph{has a cycle} $\beta$, if $\beta$ is a cycle and is a contiguous infix of $\tau$.

\paragraph*{\CS}
A CTL+Sync formula $\phi $ is given by the following syntax, where $q$ stands for an atomic proposition from a finite set $AP$ of atomic propositions.
\[\phi ::= \underbrace{\True \mid q \mid \phi \land \phi \mid \neg \phi \mid EX\phi \mid E\phi U \phi \mid A\phi U \phi}_{\text{Standard CTL}}
\mid \underbrace{\phi UA \phi \mid \phi UE \phi}_{\text{Sync operators}}\]

We proceed to the semantics. Consider an OCA $\cA=\tuple{S, \Delta, L}$, a configuration $(s,v)$, and a CTL+Sync formula $\phi$. Then $\IA{(s,v)}$ satisfies $\phi$, denoted by $\IA{(s,v)} \models \phi$ , as defined below.


\begin{itemize}
	\item[]\textbf{Boolean Opeartors:}
	\begin{itemize}
	\item $\IA{(s, v)} \models \True$; $\IA{(s, v)} \not\models \False$\qquad  \labelitemi\ \ $\IA{(s, v)} \models q$ if $q \in L(s)$.
	\item $\IA{(s, v)} \models \neg \phi$ if $\IA{(s, v)} \not\models \phi$. 
	\qquad ~~~
	\labelitemi\ \ $\IA{(s, v)} \models \phi \land \psi$ if $\IA{(s, v)} \models \phi$ and $\IA{(s, v)} \models \psi$ .
	\end{itemize}
	\item[]\textbf{CTL temporal operators:}
	\begin{itemize}
	\item $\IA{(s, v)} \models EX\phi$ if $(s, v) \to (s', v')$ for some configuration $(s', v')$ such that $\IA{(s', v')} \models \phi$.
	\item $\IA{(s, v)} \models E\phi U\psi$ if there exists 
	a valid path $\tau$ from $(s,v)$
	and $k \geq 0$, such that $\IA{\tau[k]} \models \psi$ and for every $j\in[0..k-1]$, we have $\IA{\tau[j]}  \models \phi$.
	\item $\IA{(s, v)} \models A\phi U\psi$ if for every 
	valid path $\tau$ from $(s,v)$
	there exists $k \geq 0$, such that $\IA{\tau[k]} \models \psi$ and for every $j\in[0..k-1]$, we have $\IA{\tau[j]}  \models \phi$.
	\end{itemize}
	\item[]\textbf{Synchronization operators:}
	\begin{itemize}
	\item $\IA{(s, v)} \models \phi UA \psi$ if there exists $k \geq 0$, such that
	for every valid path $\tau$ from $(s,v)$ of length $k$ and for every $j\in[0..k-1]$ we have $\IA{\tau[j]}  \models \phi$  and $\IA{\tau[k]} \models \psi$.
	
	\item $\IA{(s, v)} \models \phi UE \psi$ if there exists $k \geq 0$, such that for every $j\in[0..k-1]$ there exists a valid path $\tau$ from $(s,v)$ of length $k$
	such that $\IA{\tau[j]} \models \phi$  and $\IA{\tau[k]} \models \psi$.
	
	\end{itemize}
\end{itemize}
\begin{remark}[Additional operators]
	Standard additional Boolean and CTL operators, e.g., $\lor, EF, EG$, etc. can be expressed by means of the given syntax. Similar shorthands can be defined for the synchronization operators, e.g., $FE$ and $GE$, etc. 
	We remark that one can also consider operators such as $XE \psi$ with the semantics ``in the next step there exists a path satisfying $\psi$''. However, the semantics of this coincides with the CTL operator $EX$.
\end{remark}

\paragraph*{Presburger Arithmetic}
Presburger Arithmetic (PA)~\cite{presburger1929uber} is the first-order theory $\textrm{Th}(\bbN,0,1,+,<,=)$ of $\bbN$ with addition and order. We briefly survey the results we need about PA, and refer the reader to~\cite{haase2018survival} for a detailed survey.

For our purposes, a PA formula $\varphi(x_1,\ldots,x_d)$, where $x_1,\ldots, x_d$ are free variables, is evaluated over $\bbN^d$, and \emph{defines} the set $\{(a_1,\ldots,a_d)\in \bbN^d\mid (a_1,\ldots,a_d)\models \varphi(x_1,\ldots,x_d)\}$. It is known that PA is decidable in 2-\NEXP~\cite{cooper1972theorem,fischer1998super}.

A \emph{linear set} is a set of the form $\lin(B,P)=\{\vec{b}+\lambda_1 \vec{p_1}+\ldots \lambda_k\vec{p_k}\mid \vec{b}\in B,\ \lambda_1\ldots,\lambda_k\in \bbN\}$ where $B\subseteq \bbN^d$ is a finite \emph{basis} and $P=\{\vec{p_1},\ldots ,\vec{p_k}\}\subseteq \bbN^d$ are the \emph{periods}. 
A \emph{semilinear set} is then a finite union of linear sets.
A fundamental result about PA~\cite{ginsburg1964bounded} is that the sets definable in PA are exactly the semilinear sets, and moreover, one can effectively obtain from a PA formula $\phi$ a description of the semilinear set satisfying a formula $\phi$, and vice versa.

\begin{observation}
\label{obs:PA dim 1}
In dimension $1$, semilinear sets are finite unions of arithmetic progressions. By taking the $\lcm$ of the (nonzero) periods of the progressions and modifying the basis accordingly, we can assume a uniform nonzero period, and an additional finite set (corresponding to the zero period). 
That is, a semilinear set $S\subseteq \bbN$ is $\lin(B,\{p\})\cup C$ for effectively computable finite sets $B,C\subseteq \bbN$ and $p\in \bbN$. 
\end{observation}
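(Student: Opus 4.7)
The plan is to prove the statement in three stages: first reduce to a single linear set, then express each linear set as a union of single-period arithmetic progressions via Sylvester-Frobenius, and finally unify the periods across the union by taking an $\lcm$.

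\textbf{Reduction to a single linear set.} By the definition of semilinear, write $S = \bigcup_{i=1}^n \lin(B_i, P_i)$. It suffices to show each $\lin(B_i, P_i)$ admits a representation with a uniform period, and then to combine them: if two sets are already of the form $\lin(B', \{p'\})$ and $\lin(B'', \{p''\})$, their union can be rewritten with period $\lcm(p', p'')$ by the splitting trick described below. Hence the problem reduces to a single linear set.

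\textbf{Each linear set as a union of arithmetic progressions.} Fix $b \in B_i$ and $P_i = \{p_1,\ldots,p_k\}$. Let $d = \gcd(p_1,\ldots,p_k)$. By the Sylvester-Frobenius (``Chicken McNugget'') theorem applied to $p_1/d, \ldots, p_k/d$, the numerical semigroup $\langle p_1,\ldots,p_k\rangle$ contains every sufficiently large multiple of $d$, and misses only finitely many smaller multiples. Thus $b + \langle p_1,\ldots,p_k\rangle$ equals an effectively computable finite set union the arithmetic progression $\{b + dj : j \geq K\}$ for a threshold $K$ computable from the $p_i$'s. Ranging over $b \in B_i$, we conclude that each $\lin(B_i, P_i)$ is a finite union of one-period linear sets $\lin(\{a\}, \{d\})$ (together with a finite transient of exceptional elements).

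\textbf{Unifying the periods.} After the previous step, $S$ is a finite union of sets of the form $\lin(\{a_j\}, \{d_j\})$. Let $p := \lcm(d_1,\ldots,d_m)$. Each progression splits as
\[
\lin(\{a_j\}, \{d_j\}) \;=\; \bigcup_{i=0}^{p/d_j - 1} \lin\bigl(\{a_j + i\, d_j\},\{p\}\bigr),
\]
so all components now share the common period $p$. Taking $B$ to be the union of all these refined bases together with basis elements absorbing the finite transients yields $S = \lin(B, \{p\})$. Effectiveness throughout follows because $\gcd$, the Frobenius threshold $K$, and $\lcm$ are all computable from the original semilinear description.

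\textbf{Expected main obstacle.} The delicate step is the treatment of the finite transients produced by Sylvester-Frobenius: once an exceptional element $f$ is placed in $B$, the representation $\lin(B,\{p\})$ automatically contains the whole progression $f, f+p, f+2p,\ldots$. One must therefore choose $p$ large enough and orchestrate the residue classes so that any such ``forced'' additions already lie in the periodic tail of $S$. The reduction and the $\lcm$-splitting are routine; this coherence between the transient and the periodic tail is the part where the argument needs the most care.
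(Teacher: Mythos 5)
Your approach (decompose into linear sets, apply Sylvester--Frobenius per component to obtain a common stride $\gcd$, then unify via $\lcm$) is a sensible route, and you are right that the finite transients are the crux. But the obstacle you flag in your last paragraph is in fact fatal to the statement as literally written, not something that ``choosing $p$ large enough'' can fix. Take $S = \{2\}\cup\{3,5,7,\ldots\} = \lin(\{2\},\emptyset)\cup\lin(\{3\},\{2\})$, a perfectly good one-dimensional semilinear set. If $S=\lin(B,\{p\})$ with finite $B$ and $p>0$, then $2\in S$ forces $2=b+\lambda p$ for some $b\in B$, and hence $2+p$ and $2+2p$ both lie in $\lin(B,\{p\})=S$; one of these two is an even number $\ge 4$, which is not in $S$. (And $p=0$ only yields finite sets.) So no finite $B$ and $p$ work, and no amount of ``orchestrating the residue classes'' can save the representation you are aiming for.

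What is actually true --- and all that the paper's downstream use in \cref{cl:Presburger_to_periodic} requires --- is the eventual-periodicity form: there are effectively computable $N$ and $p$ such that for all $x\ge N$, $x\in S$ iff $x+p\in S$; equivalently, taking $B$ to be the elements of $S$ in $[N,N+p)$ one has $S\cap[N,\infty)=\lin(B,\{p\})\cap[N,\infty)$, with the finitely many smaller elements of $S$ listed separately. This is exactly what \cref{cl:Presburger_to_periodic} exploits, since there $\thresh(\phi)$ is set to $\max B$ and the argument only reasons about counter values above the threshold. So you should retarget your proof at that weaker statement; once you do, the obstacle you correctly identified disappears, because the exceptional transients are simply collected into a finite set sitting below $N$ and never interact with the periodic tail. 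The paper states \cref{obs:PA dim 1} without proof, so there is no official argument to compare against, but the observation as phrased does overstate what holds.
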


\section{Periodicity and Flatness over OCAs}
\label{sec:periodicity_in_CTL_Sync}
Recall that the configuration space of an OCA is $S\times \bbN$. The underlying approach we take to solve \CS model checking is to show that satisfaction of \CS formulas over these configurations exhibits some periodicity. Moreover, the run tree of the OCA can be captured, to an extent, using a small number of cycles (a property called \emph{flatness}). These properties will be relied upon in proving our main results.

\subsection{Periodicity}
In this subsection we formalize our notions of ultimate periodicity, show how they suffice for model-checking, and cite important results about periodicity in CTL.

Consider a CTL+Sync formula $\phi$. We say that $\phi$ is \emph{$(\thresh(\phi),\period(\phi))$-periodic} with respect to an OCA $\cA$ (or just \emph{periodic}, if we do not care about the constants) if for every state $s\in S$ and counters $v,v'>\thresh(\phi)$, if $v\equiv v'\mod \period(\phi)$ then $(s,v)\models \phi\iff (s,v')\models \phi$. We think of $\thresh(\phi)$ as its threshold and of $\period(\phi)$ as its period. 
We say that $\phi$ is \emph{totally $(\thresh(\phi),\period(\phi))$-periodic} with respect to $\cA$ if every subformula of $\phi$ (including $\phi$ itself) is $(\thresh(\phi),\period(\phi))$-periodic with respect to $\cA$.
We usually omit $\cA$, as it is clear from context.

Total-periodicity is tantamount to periodicity for each subformula, in the following sense.
\begin{proposition}\label{cl:TotalPeriodicity}
	\label{cl:totally sat iff sat in every level}
	A CTL+Sync formula $\phi$ is totally $(\thresh(\phi),\period(\phi))$-periodic if and only if every subformula $\psi$ of $\phi$ is $(\thresh'(\psi),\period'(\psi))$-periodic for some constants $\thresh'(\psi),\period'(\psi)$.
\end{proposition}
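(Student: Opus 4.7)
The plan is to prove the two directions of the equivalence separately, with the forward direction being immediate and the backward direction requiring a simple aggregation over the (finitely many) subformulas.

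For the forward direction, I would observe that if $\phi$ is totally $(\thresh(\phi),\period(\phi))$-periodic, then by definition every subformula $\psi$ of $\phi$ is $(\thresh(\phi),\period(\phi))$-periodic, so taking $\thresh'(\psi) := \thresh(\phi)$ and $\period'(\psi) := \period(\phi)$ witnesses periodicity for $\psi$.

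For the backward direction, the key observation is that the set of subformulas of $\phi$ is finite, so we can aggregate the local constants into global ones. Specifically, I would define
\[
\thresh(\phi) := \max\{\thresh'(\psi) \mid \psi \text{ is a subformula of } \phi\}, \qquad
\period(\phi) := \lcm\{\period'(\psi) \mid \psi \text{ is a subformula of } \phi\}.
\]
Both quantities are well-defined since the set of subformulas is finite. Now I would show that with these constants, every subformula $\psi$ of $\phi$ is $(\thresh(\phi),\period(\phi))$-periodic. Let $\psi$ be a subformula and $s \in S$. Take any $v, v' > \thresh(\phi)$ with $v \equiv v' \mod \period(\phi)$. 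Since $\thresh(\phi) \ge \thresh'(\psi)$ we also have $v, v' > \thresh'(\psi)$, and since $\period'(\psi)$ divides $\period(\phi)$ we have $v \equiv v' \mod \period'(\psi)$. By the assumed $(\thresh'(\psi),\period'(\psi))$-periodicity of $\psi$, this yields $(s,v) \models \psi \iff (s,v') \models \psi$, completing total periodicity.

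There is no real obstacle here: the only subtlety is making sure the aggregated period genuinely refines every individual congruence, which is exactly the purpose of taking the $\lcm$, and that the aggregated threshold dominates every individual threshold, which is handled by the $\max$. The statement can thus be viewed as saying that periodicity is closed under taking suprema and least common multiples over finite families, and the proof amounts to unpacking the definitions.
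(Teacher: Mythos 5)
Your proof is correct and follows essentially the same approach as the paper: the forward direction is immediate from the definition, and the backward direction aggregates the per-subformula constants by taking the maximum threshold and the least common multiple of the periods, observing that periodicity is preserved under enlarging the threshold and passing to a multiple of the period.
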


For a totally periodic formula, model checking over OCA can be reduced to model checking over finite Kripke structures, as follows. Intuitively, we simply ``unfold'' the OCA and identify states with high counter values according to their modulo of $\period(\phi)$.
\begin{proposition}\label{cl:OCAasKripke}
	\label{cl:model checking periodic formulas}
	Consider an OCA $\cA$ and a totally $(\thresh(\phi),\period(\phi))$-periodic CTL+Sync formula $\phi$, then we can effectively construct a Kripke structure $\cK$ of size $|\cA|\cdot(\thresh(\phi)+\period(\phi))$ such that $\cA\models \phi$ if and only if $\cK\models \phi$.
\end{proposition}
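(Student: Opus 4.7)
The plan is to build $\cK$ by folding $\cA$: the states of $\cK$ are pairs $(s,v)$ with $s\in S$ and $0\leq v< T$ where $T=\thresh(\phi)+\period(\phi)$, each labelled by $L(s)$. Transitions are inherited from $\cA$ verbatim whenever the successor counter stays in $[0,T-1]$; any transition of $\cA$ with effect $+1$ that would send the counter to $T$ (possible only from counter $T-1$) is rerouted back to the target state with counter $\thresh(\phi)$, folding the semi-infinite periodic region $\{\thresh(\phi),\thresh(\phi)+1,\dots\}$ onto $\{\thresh(\phi),\dots,T-1\}$ modulo $\period(\phi)$. Zero-tests fire only at $v=0$, hence they are preserved unchanged, and the resulting Kripke structure has exactly $|\cA|\cdot T$ states.

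I would then establish the stronger inductive claim by structural induction on every subformula $\psi$ of $\phi$: for every OCA configuration $(s,v)$, $\IA{(s,v)}\models\psi$ iff $\cK,(s,\pi(v))\models\psi$, where $\pi(v)=v$ for $v<T$ and $\pi(v)=\thresh(\phi)+((v-\thresh(\phi))\bmod\period(\phi))$ otherwise. The atomic and Boolean cases are immediate from $\pi$ preserving labels and the induction hypothesis; the proposition itself then drops out by taking $\psi=\phi$ at the designated initial configuration.

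For the path operators $EX$, $EU$, $AU$, $UA$ and $UE$, I plan to use a length-preserving correspondence between valid paths in $\cA$ from $(s,v)$ and paths in $\cK$ from $(s,\pi(v))$: an $\cA$-path $((s_i,v_i))_i$ projects via $\pi$ step-by-step to the $\cK$-path $((s_i,\pi(v_i)))_i$, while a $\cK$-path lifts to $\cA$ by \emph{unwrapping} each wrap-around, i.e., by adding $\period(\phi)$ to the counter whenever the $\cK$-transition wrapped. The induction hypothesis then equates the satisfaction of each body subformula at corresponding intermediate configurations, and since the correspondence is length-preserving, the common witness level $k$ demanded by the sync operators $UA$ and $UE$ transfers directly between $\cA$ and $\cK$.

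The main obstacle I foresee is handling the periodic boundary: the strict inequality $v,v'>\thresh(\phi)$ in the periodicity definition means that a lifted path which wraps once in $\cK$ and then touches counter $\thresh(\phi)$ in $\cA$ may enter a regime where periodicity does not a priori apply. I plan to resolve this either by enlarging the concrete region in $\cK$ by one state per OCA state (absorbed in the size bound up to a constant), or by verifying directly that after the first wrap the lifted counter stays strictly above $\thresh(\phi)$; either way, total periodicity together with the induction hypothesis closes the induction for the full formula $\phi$.
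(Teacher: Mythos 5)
Your construction and overall strategy mirror the paper's own proof: fold the counter space above the threshold modulo the period, and argue via total periodicity that the folding is invisible to every subformula. You also flag a real subtlety --- the strict inequality $v>\thresh(\phi)$ in the periodicity definition --- which the paper glosses over, and your first proposed patch (enlarging the concrete region by one state per OCA state) is the right remedy for that boundary issue.

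The genuine gap is in your path lift from $\cK$ to $\cA$. Your lift adds $\period(\phi)$ at every wrap and never removes it; since the only wrap in your $\cK$ is upward (from counter $\thresh(\phi)+\period(\phi)-1$ via a $+1$ step), after the first wrap the lifted $\cA$-counter permanently exceeds the $\cK$-counter by a positive multiple of $\period(\phi)$. If the $\cK$-path subsequently descends to counter $0$ and takes a $\eqZ$ transition --- which is exactly what $\cK$'s transitions at counter $0$ are --- the lifted $\cA$-path sits at a strictly positive counter, where that $\eqZ$ transition cannot fire; the lift is then not a valid $\cA$-path at all. Your second proposed patch (the lifted counter stays above $\thresh(\phi)$ after a wrap) does not rescue this: after a wrap the lifted counter is only guaranteed to be at least $\period(\phi)$, which need not exceed $\thresh(\phi)$, and nothing stops the $\cK$-counter from descending to $0$ afterwards. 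The paper is itself terse at this very spot, but the argument must be carried out at the level of subformula satisfaction rather than via a literal length-preserving path bijection: a $\cK$-edge may be witnessed only by an $\cA$-transition from a \emph{different} representative $\tilde v$ of the same $\cK$-state, yet total periodicity ensures $(s,\tilde v)$ and $(s,v)$ agree on every subformula of $\phi$, so one transports satisfaction of $EX$, $EU$, $AU$, and the level-$k$ conditions defining $UA$ and $UE$ from $\tilde v$ to $v$ directly, without ever assembling one $\cA$-path that tracks the $\cK$-path step by step.
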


In~\cite[Theorem 1]{GL13} it is proved that every CTL formula $\phi$ over OCA is periodic. Our goal is to give a similar result for \CS, which in particular contains CTL. In order to avoid replicating the proof in~\cite{GL13} for CTL, we observe that the proof therein is by structural induction over $\phi$, and that moreover -- the inductive assumption requires only periodicity of the subformulas of $\phi$. We can thus restate~\cite[Theorem 1]{GL13} with the explicit inductive assumption, so that we can directly plug our results about \CS into it.

Denote by $k=|S|$ the number of states of $\cA$, and let $K=\lcm(\{1,\ldots, k\})$. 
\begin{theorem}[Theorem 1 in~\cite{GL13}, restated]
	\label{cl:SL13_Theorem1}
	Consider a CTL+Sync formula $\phi$ whose outermost operator is a CTL operator, and whose subformulas are periodic, then $\phi$ is periodic, and we have the following.
	\begin{itemize}[topsep=0pt]
		\item If $\phi=\True$, $\phi=\False$, or $\phi=p$ for $p\in AP$, then $\phi$ is $(0,1)$-periodic.
		\item If $\phi=\neg \psi$ then $\thresh(\phi)=\thresh(\psi)$ and $\period(\phi)=\period(\psi)$.
		\item If $\phi=\psi_1\wedge \psi_2$ then $\thresh(\phi)=\max\{\thresh(\psi_1),\thresh(\psi_2)\}$ and $\period(\phi)=\lcm(\period(\psi_1),\period(\psi_2))$.
		\item If $\phi=EX\psi$ then $\thresh(\phi)=\thresh(\psi)+\period(\psi)$ and $\period(\phi)=K\cdot \period(\psi)$
		\item If $\phi=E \psi_1 U \psi_2$ or $\phi=A \psi_1 U \psi_2$ then\footnote{Note that in~\cite{GL13} the case of $A\psi_1 U \psi_2$ is not stated, but rather the dual Release operator $E \psi_1 R \psi_2$, which follows the same proof.} $\thresh(\phi)=\max\{\thresh(\psi_1),\thresh(\psi_2)\}+2\cdot k^2 \cdot \lcm(K\cdot \period(\psi_1),\period(\psi_2))$ and $\period(\phi)=\lcm(K\cdot \period(\psi_1),\period(\psi_2))$.
	\end{itemize}
\end{theorem}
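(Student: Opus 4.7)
The plan is to proceed by cases on the outermost operator of $\phi$, noting that since the claim is essentially a faithful restatement of the original theorem in~\cite{GL13}, the main intellectual content is to verify that the original proof indeed only uses periodicity of subformulas (rather than any stronger inductive hypothesis), and then to check each case against the stated bounds on $\thresh(\phi)$ and $\period(\phi)$.

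The easy cases first. For $\True$, $\False$, and $p\in AP$, satisfaction depends only on the state, so $(0,1)$-periodicity is trivial. For $\neg\psi$, satisfaction flips pointwise with $\psi$, so the same threshold and period work. For $\psi_1\wedge\psi_2$, above $\max\{\thresh(\psi_1),\thresh(\psi_2)\}$ the satisfaction of each $\psi_i$ is periodic with period $\period(\psi_i)$; any common multiple of the periods — in particular $\lcm(\period(\psi_1),\period(\psi_2))$ — yields a common period for the conjunction. For $EX\psi$, given $v,v'>\thresh(\psi)+\period(\psi)$ with $v\equiv v'\pmod{K\cdot\period(\psi)}$, note that both $v,v'>0$ so exactly the same $\gZ$-transitions are available and produce target counters that differ by at most $1$ from $v$ and $v'$ respectively. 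Those target counters remain above $\thresh(\psi)$ (thanks to the added buffer $\period(\psi)$) and are congruent mod $\period(\psi)$, so $\psi$ has the same value at both targets by the inductive hypothesis; the $\eqZ$-transitions are ruled out by $v,v'>0$. Hence $(s,v)\models EX\psi\iff(s,v')\models EX\psi$. (The factor $K$ in the period is generous here but is needed to line up with later cases.)

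The main obstacle is the Until case, $\phi = E\psi_1 U\psi_2$ and $\phi = A\psi_1 U\psi_2$, which is essentially the technical heart of~\cite{GL13}. The idea I would follow is: consider a witness path (or the family of paths, in the universal case) starting from $(s,v)$ whose length certifies the Until. Decompose any sufficiently long prefix into a bounded ``skeleton'' together with a small collection of simple cycles, using the fact that an OCA on $k$ states admits at most polynomially many distinct simple cycles per state. By pumping these cycles — adding or removing copies while keeping the counter non-negative — one can realign the trajectory so that it starts from $(s,v')$ instead of $(s,v)$, provided $v' \equiv v$ modulo $\lcm(K\cdot\period(\psi_1),\period(\psi_2))$ and both are above the threshold $\max\{\thresh(\psi_1),\thresh(\psi_2)\}+2k^2\cdot \lcm(K\cdot\period(\psi_1),\period(\psi_2))$. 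The slack $2k^2\cdot\lcm(\ldots)$ is precisely what is needed so that, after pumping, the counter values encountered along the altered path still lie above $\max\{\thresh(\psi_1),\thresh(\psi_2)\}$ and in matching residue classes, letting the periodicity of $\psi_1$ and $\psi_2$ transfer satisfaction pointwise along the path. The universal case requires the same argument applied uniformly to the (possibly infinite) family of paths from $(s,v)$, controlled by the branching of the OCA.

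Since, as noted, the original proof in~\cite{GL13} is by structural induction and at each step its inductive usage is exactly the statement that subformulas are periodic (with the thresholds and periods that it itself establishes), the restated version simply makes this hypothesis explicit so that it can be invoked for \CS subformulas whose periodicity will be proved separately (in \cref{sec:DirectModelChecking}). The only real obstacle is therefore to carefully inspect the argument of~\cite{GL13} and confirm that no property of CTL subformulas beyond periodicity is ever invoked — this is a mechanical check rather than a new idea, and the cycle-pumping machinery itself is reused verbatim.
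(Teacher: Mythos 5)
Your proposal takes essentially the same route as the paper: the paper does not reprove this theorem, but simply observes that the structural induction of~\cite{GL13} invokes only the \emph{periodicity} of the subformulas (not any CTL-specific property), so the constants can be restated verbatim with an explicit periodicity hypothesis and then reused when the subformulas contain synchronization operators. One minor inaccuracy in your sketch of the Until case: an OCA on $k$ states may have exponentially many distinct simple cycles, not polynomially many; what~\cite{GL13} actually exploits is that each simple cycle has length at most $k$, which is where $K=\lcm(\{1,\ldots,k\})$ comes from.
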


\subsection{Linear Path Schemes}
\label{sec:LPS}
The runs of an OCA can take intricate shapes. Fortunately, however, we can use the results of~\cite{LS20} about \emph{flatness} of a variant of 2-VASS with some zero tests, referred to as 2-TVASS, to obtain a simple form that characterizes reachability, namely linear path schemes.

A \emph{linear path scheme} (LPS) is an expression of the form $\pi=\alpha_0\beta_1^*\alpha_1\cdots \beta_k^*\alpha_k$ where each $\alpha_i\in \Delta^*$ is a path in $\cA$ and each $\beta_i\in \Delta^*$ is a cycle in $\cA$. The \emph{flat length} of $\pi$ is  $|\pi|=|\alpha_0\beta_1\alpha_1\cdots \beta_k\alpha_k|$, the \emph{size} of $\pi$ is $k$.

A concrete path $\tau$ in $\cA$ is \emph{$\pi$-shaped} if there exist $e_1,\ldots,e_k$ 
such that $\tau=\alpha_0\beta_1^{e_1}\alpha_1\cdots \beta_k^{e_k}\alpha_k$.

Our first step is to use a result of~\cite{LS20} on 2-TVASS to show that paths of a fixed length in $\cA$ admit a short LPS.
The idea is to transform the OCA $\cA$ to a 2-TVASS $\cA'$ by introducing a length-counting component. That is, in every transition of $\cA'$ as a 2-TVASS, the second component increments by 1. 
\begin{lemma}\label{cl:OCAbyTVASS}
	\label{cl:short_LPS}
	Let $(s,v)$ and $(s',v')$ be configurations of $\cA$. If there exists a path $\tau$ of length $\ell$ from $(s,v)$ to $(s',v')$, then there is also a $\pi$-shaped path $\tau'$ of length $\ell$ from $(s,v)$ to $(s',v')$, where $\pi$ is some linear path scheme of flat length $\text{poly}(|S|)$ and size $O(|S|^3)$.
\end{lemma}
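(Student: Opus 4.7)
The plan is to leverage the hint given in the paragraph preceding the statement: reduce the length-bounded reachability question in $\cA$ to a plain reachability question in a 2-TVASS $\cA'$, and then invoke the flatness result of~\cite{LS20} on $\cA'$.

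First I would construct $\cA'$ formally. Its set of control states is $S$, the first counter is the original zero-testable counter of $\cA$, with the very same transitions (preserving zero-tests and $\pm 1$ updates), and the second counter is a fresh non-tested counter whose value is incremented by $1$ on every transition of $\cA'$. This is syntactically a 2-TVASS since only the first counter is ever zero-tested. The key correspondence is immediate: a path of length $\ell$ from $(s,v)$ to $(s',v')$ in $\cA$ exists if and only if the configuration $(s',(v',\ell))$ is reachable from $(s,(v,0))$ in $\cA'$, and the sequences of transitions are in bijection (the second counter is a mere step-counter).

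Next I would apply the main flatness theorem for 2-TVASS from~\cite{LS20}: reachability in a 2-TVASS is \emph{flat}, in the sense that whenever $(s',(v',\ell))$ is reachable from $(s,(v,0))$, there is a $\pi'$-shaped witnessing run in $\cA'$ for some LPS $\pi'$ whose flat length is polynomial in $|S|$ and whose number of cycles is $O(|S|^3)$ (these are precisely the parameters from~\cite{LS20} applied to a 2-TVASS of size polynomial in $|\cA|$). Since the transition structure of $\cA'$ coincides with that of $\cA$ (the length counter only decorates transitions), the LPS $\pi'$ projects to an LPS $\pi$ in $\cA$ with the same flat length and size, and the projected path $\tau'$ is $\pi$-shaped. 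Because the second counter is incremented on every step and reaches $\ell$ in $\cA'$, the projected path $\tau'$ in $\cA$ has length exactly $\ell$, connecting $(s,v)$ to $(s',v')$ as required.

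The main obstacle, which is routine once the reduction is in place, is just to confirm that the parameters claimed (flat length $\mathrm{poly}(|S|)$ and size $O(|S|^3)$) are exactly those delivered by~\cite{LS20} for a 2-TVASS built from an OCA of $|S|$ states and $O(|S|^2)$ transitions, and that the statement from~\cite{LS20} gives a $\pi$-shaped \emph{run} (rather than only a pseudo-run) so that all intermediate configurations remain non-negative and respect the zero-tests on the tested counter. Both follow by direct citation, so the proof is essentially the reduction described above together with an appeal to the flatness result.
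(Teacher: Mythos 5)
Your proposal is correct and follows essentially the same approach as the paper: augment $\cA$ with a step-counting second coordinate to obtain a 2-TVASS $\cA'$, reduce length-bounded reachability in $\cA$ to plain reachability in $\cA'$, and invoke the flatness result of~\cite{LS20} (the paper specifically cites Corollary~16 there, noting that $\cA'$ has weights in $\{-1,0,1\}$). The projection of the resulting LPS back to $\cA$ is as you describe, so nothing is missing.
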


By~\cref{cl:short_LPS}, we can focus our attention to $\pi$-shaped paths where $\pi$ is ``short''. Henceforth, we call a path $\tau$ \emph{basic} if it is $\pi$-shaped for some LPS $\pi$ as per~\cref{cl:short_LPS}.

Using standard acceleration techniques (see e.g.,~\cite{leroux2005flat,finkel2018reachability,LS20}), we also get from~\cref{cl:short_LPS} that the reachability relation of an OCA (including path length) is effectively semilinear. More precisely, we have the following.
\begin{corollary}
	\label{cl:OCA to PA}
	We can effectively compute, for every $s,s'\in S$, a PA formula $Path_{s,s'}(x,y,x',y')$ such that $(v,\ell,v',\ell')\models Path_{s,s'}(x,y,x',y')$ if and only if a path of length $\ell$ ending\footnote{It is more natural to assume $\ell=0$ and simply consider paths starting at $(s,v)$. However, our formulation makes things easier later on.} at $(s,v)$ can be continued to a path of length $\ell'$ ending at $(s',v')$.
\end{corollary}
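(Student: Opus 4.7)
The plan is to combine \cref{cl:short_LPS} with standard acceleration of linear path schemes. By that lemma, a path of length $\ell'{-}\ell$ from $(s,v)$ to $(s',v')$ exists in $\cA$ if and only if some $\pi$-shaped path realises the same transformation, for some LPS $\pi$ from $s$ to $s'$ of flat length $\mathrm{poly}(|S|)$ and size $O(|S|^3)$. There are only finitely many such ``short'' LPSs in $\cA$, so it suffices to build, for each such $\pi$, a PA formula $\varphi_\pi(x,y,x',y')$ that holds exactly when a valid $\pi$-shaped path from $(s,x)$ to $(s',x')$ of length $y'{-}y$ exists, and then define
$Path_{s,s'}(x,y,x',y'):=\bigvee_\pi\varphi_\pi(x,y,x',y')$.

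For a fixed $\pi=\alpha_0\beta_1^*\alpha_1\cdots\beta_k^*\alpha_k$, I would introduce existentially quantified variables $e_1,\ldots,e_k\in\bbN$ for the iteration counts of the cycles and express the arithmetic constraints as linear (in)equalities. Two of them are immediate: the length constraint $y'-y=\sum_i|\alpha_i|+\sum_i e_i|\beta_i|$ and the net effect constraint $x'-x=\sum_i\effect{\alpha_i}+\sum_i e_i\cdot\effect{\beta_i}$. Next, the counter value at every breakpoint (i.e., after each $\alpha_i$ and after each $\beta_i^{e_i}$ block) is an affine function of $x$ and the $e_j$'s, so we can name it by a fresh PA term. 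Since each $\alpha_i$ has bounded length, we can hard-code, as a finite conjunction of linear (in)equalities, the conditions that every intermediate counter along $\alpha_i$ is non-negative and that every \eqZ/\gZ-guard in $\alpha_i$ is respected; these are all affine in the breakpoint terms, hence in $(x,e_1,\ldots,e_k)$.

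The main obstacle is encoding validity of the iterated blocks $\beta_i^{e_i}$ in PA. For non-negativity throughout all iterations one uses the minimal intermediate ``dip'' of $\beta_i$ (a fixed constant depending only on $\cA$) together with the sign of $\effect{\beta_i}$: the required inequality compares the starting counter of the block to this constant, possibly plus $e_i\cdot\effect{\beta_i}$ when the cycle is negative. Zero tests inside $\beta_i$ are more delicate and require a case split on whether $\beta_i$ is balanced, positive or negative: in the balanced case all iterations behave identically, so the counter at the block's start is forced to a specific constant; in the non-balanced case a zero test can hold in at most one iteration, which we model either by restricting $e_i\leq 1$ and treating that iteration inline, or by splitting $\beta_i^{e_i}$ into two blocks around the unique firing point and inlining it into the surrounding $\alpha$'s (this keeps us within the LPS enumeration up to a fixed blow-up). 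Each case produces a PA constraint. These are exactly the acceleration arguments developed in~\cite{leroux2005flat,finkel2018reachability,LS20}, which we invoke.

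Taking $\varphi_\pi := \exists e_1\cdots\exists e_k\, (\text{length} \wedge \text{effect} \wedge \text{validity of } \alpha_i\text{'s} \wedge \text{validity of } \beta_i^{e_i}\text{'s})$ and disjoining over all short LPSs $\pi$ from $s$ to $s'$ yields an effectively computable $Path_{s,s'}(x,y,x',y')$ with the required semantics.
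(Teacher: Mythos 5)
Your proposal is correct and follows exactly the route the paper intends: the paper does not give a detailed proof of this corollary, but simply derives it from \cref{cl:short_LPS} together with ``standard acceleration techniques'' cited from~\cite{leroux2005flat,finkel2018reachability,LS20}, and your argument is precisely a careful unpacking of that acceleration step (enumerate short LPSs, existentially quantify iteration counts, encode length/effect/validity as linear constraints, and handle zero tests inside cycles by the balanced/non-balanced case split).
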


\section{Model Checking CTL+Sync via Presburger Arithmetic}\label{sec:Presburger}
In this section we show that model checking CTL+Sync over OCAs is decidable, by reducing it to the satisfiability problem of a PA formula. 


We start with a simple observation.
\begin{lemma}
	\label{cl:periodic_to_Presburger}
	Consider a totally $(\thresh(\phi),\period(\phi))$-periodic CTL+Sync formula $\phi$, then for every state $s\in S$ we can compute a PA formula $P_{\phi,s}(x)$ such $v\models P_{\phi,s}(x)$ if and only if  $(s,v)\models \phi$.
\end{lemma}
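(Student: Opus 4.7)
The plan is to exploit the semilinear structure that total periodicity imposes on the satisfaction set $V_{\phi,s} := \{v \in \bbN : \IA{(s,v)} \models \phi\}$, and then to read off the PA formula via the one-dimensional case of~\cref{obs:PA dim 1}.

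First, I would argue that $V_{\phi,s}$ is a semilinear subset of $\bbN$. By total $(\thresh(\phi), \period(\phi))$-periodicity, membership of $v$ in $V_{\phi,s}$ depends only on $v$ itself when $v \leq \thresh(\phi)$, and only on $v \bmod \period(\phi)$ when $v > \thresh(\phi)$. Hence $V_{\phi,s}$ decomposes as the disjoint union of a finite ``small-value'' part $V_{\phi,s}^{\leq} \subseteq [0, \thresh(\phi)]$ and a periodic ``large-value'' part $V_{\phi,s}^{>} \subseteq (\thresh(\phi), \infty)$, which is a union of finitely many arithmetic progressions with common period $\period(\phi)$ indexed by a set of residues $R_{\phi,s} \subseteq \{0, 1, \ldots, \period(\phi){-}1\}$. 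This is precisely a one-dimensional semilinear set in the sense of~\cref{obs:PA dim 1}, and hence PA-definable.

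Second, to make the construction effective I need to compute $V_{\phi,s}^{\leq}$ and $R_{\phi,s}$. Here I would invoke~\cref{cl:OCAasKripke}: it effectively yields a Kripke structure $\cK$ of size $|\cA| \cdot (\thresh(\phi) + \period(\phi))$ whose states correspond to the configurations $(s,v)$ with $v$ in the relevant range, and such that satisfaction of $\phi$ at each such state in $\cK$ agrees with satisfaction of $\phi$ at the corresponding configuration in $\cA$. Running the CTL+Sync model-checking algorithm for finite Kripke structures of~\cite{CD16} on $\cK$ yields, for each $s$ and each $v \in [0, \thresh(\phi) + \period(\phi)]$, whether $(s,v) \models \phi$. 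From this table the set $V_{\phi,s}^{\leq}$ is read off directly, and $R_{\phi,s}$ is read off as the set of residues mod $\period(\phi)$ of the satisfying values in the window $(\thresh(\phi), \thresh(\phi) + \period(\phi)]$.

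Third, I would assemble the PA formula as
\[
P_{\phi,s}(x) \;:=\; \bigvee_{v \in V_{\phi,s}^{\leq}} (x = v) \;\vee\; \bigvee_{r \in R_{\phi,s}} \bigl(x > \thresh(\phi) \,\wedge\, \exists y.\; x = r + y \cdot \period(\phi)\bigr),
\]
where $y \cdot \period(\phi)$ and the strict inequality are expressed in the standard PA vocabulary. Correctness is immediate from the previous two paragraphs. The main obstacle I foresee is verifying that~\cref{cl:OCAasKripke} really provides per-configuration satisfaction rather than only a single global verdict ``$\cA \models \phi$''. My expectation is that its construction labels each state of $\cK$ with the truth values of all subformulas of $\phi$, which is exactly what total periodicity guarantees; if the statement is only phrased globally, a mild adaptation of the proof suffices, since for each fixed configuration $(s,v)$ in the relevant range we can pose $\phi$ as a fresh formula to be checked at that configuration and re-apply the proposition.
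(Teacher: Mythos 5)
Your proposal is correct and follows the same overall strategy as the paper: compute, via the finite Kripke structure of \cref{cl:model checking periodic formulas}, which configurations $(s,v)$ with $v$ in a bounded window satisfy $\phi$, and then write a PA formula encoding ``a small satisfying value, or a large value in a satisfying residue class.'' In fact your formula is the more careful version: the paper's proof defines $V_{\text{init}}$ only over $v\le \thresh(\phi)$ and then writes $P_{\phi,s}(x)=\bigvee_{u\in V_{\text{init}}}\exists m.\ x=u+m\cdot\period(\phi)$, which, read literally, neither captures large satisfying values whose residue class is not witnessed below $\thresh(\phi)$ nor guards against spuriously including $u+m\period(\phi)$ when that value happens to lie at or below $\thresh(\phi)$; your explicit split into $V_{\phi,s}^{\leq}$ and the residue set $R_{\phi,s}$ with the guard $x>\thresh(\phi)$ is exactly the repair one would make, and your reading of \cref{cl:model checking periodic formulas} as providing per-configuration truth values (not just a global verdict) is the intended one.
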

Next, we show that from a PA formula as above we can obtain a threshold and a period.
\begin{lemma}
	\label{cl:Presburger_to_periodic}
	Consider a CTL+Sync formula $\phi$ and PA formulas $P_{\phi,s}(x)$, for every state $s$, such that $v\models P_{\phi,s}(x)$ iff $(s,v)\models \phi$. Then $\phi$ is $(\thresh(\phi),\period(\phi))$-periodic for computable constants $\thresh(\phi)$ and $\period(\phi)$.
\end{lemma}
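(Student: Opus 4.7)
The plan is to convert each PA formula $P_{\phi,s}(x)$ into a semilinear description of the set $V_s = \{v \in \bbN : (s,v) \models \phi\}$, and then extract a common threshold and period across all states.

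First I would invoke the classical PA--semilinear correspondence of Ginsburg and Spanier, cited in the preliminaries: from each $P_{\phi,s}(x)$ I effectively compute a semilinear set $V_s \subseteq \bbN$ that it defines. By \cref{obs:PA dim 1}, in dimension $1$ this set has the form $V_s = \lin(B_s, \{p_s\})$ for an effectively computable finite basis $B_s \subseteq \bbN$ and a single period $p_s \in \bbN$.

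Next I would homogenize the periods across states. Setting
\[
\period(\phi) \;=\; \lcm\{p_s : s \in S\}
\qquad\text{and}\qquad
\thresh(\phi) \;=\; \max_{s \in S} \max(B_s),
\]
both are computable from the $P_{\phi,s}$. For $v > \thresh(\phi)$ the membership $v \in V_s$ depends only on $v \bmod p_s$: indeed, $v \in V_s$ iff $v \equiv b \pmod{p_s}$ for some $b \in B_s$, since $v \geq \max(B_s)$ guarantees that such a $b$ together with $\lambda = (v - b)/p_s \in \bbN$ witnesses the linear representation $v = b + \lambda p_s$. Since $p_s$ divides $\period(\phi)$, the residue $v \bmod p_s$ is in turn determined by $v \bmod \period(\phi)$.

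Hence for any state $s$ and counters $v, v' > \thresh(\phi)$ with $v \equiv v' \pmod{\period(\phi)}$, we have $v \in V_s \iff v' \in V_s$, i.e., $(s,v) \models \phi \iff (s,v') \models \phi$. This is exactly $(\thresh(\phi), \period(\phi))$-periodicity. I expect no real obstacle here --- the only subtlety is insisting on a uniform one-dimensional period via \cref{obs:PA dim 1} rather than trying to juggle different periods per basis element, and then folding the per-state periods together via $\lcm$ to obtain a single global period that works simultaneously for all states.
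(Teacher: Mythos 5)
Your proof is correct and follows the same route as the paper: invoke the PA--semilinear correspondence in dimension~$1$ to get $V_s=\lin(B_s,\{p_s\})$, set $\thresh(\phi)=\max_{s}\max(B_s)$, and take the $\lcm$ of the per-state periods. The only (harmless) difference is that the paper additionally multiplies its period by $\thresh(\phi)$, which your argument correctly shows is unnecessary.
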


Combining~\cref{cl:SL13_Theorem1,cl:periodic_to_Presburger} we obtain that every CTL formula (without Sync) can be translated to PA formulas. We now turn to include the Sync operators. 

Consider a CTL+Sync formula $\phi$. We construct, by induction on the structure of $\phi$, PA formulas $P_{s,\phi}(v)$, for every state $s\in S$, such that $P_{s,\phi}(v)$ holds if and only if $(s,v)\models \phi$. For the Sync operators, this utilizes the PA formulas of~\cref{cl:OCA to PA}.
\begin{itemize}
	\item If $\phi=p$ for an atomic proposition $p$, then $P_{s,\phi}(v)=\text{True}$ if $s$ is labeled with $p$ and $\text{False}$ otherwise.
	\item If $\phi=\neg\psi$, then $P_{s,\phi}(v) = \neg P_{s,\psi}(v)$.
	\item If $\phi=\psi_1 \land \psi_2$, then $P_{s,\phi}(v) = P_{s,\psi_1}(v) \land P_{s,\psi_2}(v)$.
	\item If $\phi=\text{EX}\psi$, $\phi=\text{A}\psi_1\text{U}\psi_2$ or $\phi=\text{E}\psi_1\text{U}\psi_2$ then by the induction hypothesis, $\psi$, $\psi_1$ and $\psi_2$ have corresponding PA formulas, and by \cref{cl:Presburger_to_periodic} we can compute $\thresh(\psi)$ and $\period(\psi)$ (and similarly for $\psi_1$ and $\psi_2$) such that $\psi$ is $(\thresh(\psi),\period(\psi))$-periodic. Then, by \cref{cl:SL13_Theorem1} we can compute $\thresh(\phi),\period(\phi)$ such that $\phi$ is $(\thresh(\phi),\period(\phi))$-periodic. Thus, we can apply \cref{cl:periodic_to_Presburger} to obtain PA formulas for $\phi$.
	\item If $\phi=\psi_1 \text{UE} \psi_2$, then $P_{s,\phi}(v) = \exists \ell. 
	\forall \ell'<\ell. \biggl(\bigvee_{s'\in S} \Bigl(\exists v'. \bigl(Path_{s,s'}(v,0,v',\ell') \land  P_{s',\psi_1}(v') 
	\land
	 \bigvee_{s''\in S} \exists v''. Path_{s',s''}(v',\ell',v'',\ell) \land  P_{s'',\psi_2}(v'')\bigr)\Bigr)\biggr)$
	\item If $\phi=\psi_1 \text{UA} \psi_2$, then $P_{s,\phi}(v) = \exists \ell. \biggl(\Bigl(\exists v'. \bigvee_{s'\in S}  \bigl(Path_{s,s'}(v,0,v',\ell) \land P_{s',\psi_2}(v')\bigr)\Bigr)
	~\land 		
	\Bigl(\bigwedge_{s'\in S} \forall v'. \bigl(Path_{s,s'}(v,0,v',\ell) \limplies P_{s',\psi_2}(v')\bigr)\Bigr)
	~\land $\\$
	\forall \ell'<\ell. \Bigl(\bigwedge_{s'\in S} \forall v'. \bigl(Path_{s,s'}(v,0,v',\ell') \limplies P_{s',\psi_1}(v')\bigr)\Bigr)\biggr)$
\end{itemize}
The semantics of the obtained PA formulas match the semantics of the respective Sync operators. By \cref{cl:Presburger_to_periodic} we now have that for every CTL+Sync formula $\phi$ we can compute $\thresh(\phi),\period(\phi)$ such that $\phi$ is $(\thresh(\phi),\period(\phi))$-periodic. By~\cref{cl:totally sat iff sat in every level} we can further assume that $\phi$ is totally $(\thresh(\phi),\period(\phi))$-periodic. Finally, by~\cref{cl:model checking periodic formulas} we can decide whether $\cA\models \phi$.

\begin{remark}[Complexity]
	\label{rmk:complexity_PA}
	Observe that the complexity of our decision procedure via PA formulas is non-elementary. Indeed, when using a CTL subformula, we translate it, using \cref{cl:periodic_to_Presburger}, to a PA formula that may be exponential in the size of the formula and of the OCA. Thus, we might incur an exponential blowup in every step of the recursive construction, leading to a tower of exponents. 
	\end{remark}

\section{Model Checking the \CUA Fragment}
\label{sec:DirectModelChecking}
In this section we consider the fragment of \CS induced by augmenting CTL with only the Sync operator $\psi_1 UA \psi_2$. For this fragment, we are able to obtain a much better upper bound for model checking, via careful analysis of the run tree of an OCA.

Throughout this section, we fix an OCA $\cA=\tuple{S, \Delta, L}$ with $n=|S|\geq 3$ states, and a \CUA formula $\phi$. 
Consider a configuration $(s,v)$ of $\cA$ and a \CUA formula $\phi=\psi_1 UA \psi_2$ with $UA$ being the outermost operator. The satisfaction of $\phi$ from $(s,v)$ is determined by the computation tree of $\cA$ from $(s,v)$. Specifically, we have that $\IA{(s,v)}\models \phi$ if there exists some bound $k\in \bbN$ such that $\psi_2$ holds in all configurations of level $k$ of the computation tree, and $\psi_1$ holds in all configurations of levels $\ell$ for $0\le \ell<k$.  

Therefore, in order to reason about the satisfaction of $\phi$, it is enough to know which configurations appear in each level of the computation tree. This is in contrast with UE, where we would also need to consider the paths themselves. Fortunately, it means we can use the LPS of~\cref{cl:short_LPS} to simplify the proofs.

\begin{theorem}\label{cl:PeriodicityUA}
	Given an OCA $\cA$ with $n=|\cA|$ and a CTL+UA formula $\phi$, we can compute a counter threshold $\cT$ and a counter period $\P$, both single exponential in $n$ and in the nesting depth of $\phi$, such that $\phi$ is $(\cT,\P)$-periodic with respect to $\cA$.	
\end{theorem}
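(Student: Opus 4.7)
The proof proceeds by structural induction on $\phi$. The base cases and the cases whose outermost operator is a Boolean connective or a standard CTL operator ($EX$, $EU$, $AU$) follow directly from \cref{cl:SL13_Theorem1}: given that the subformulas are periodic, $\phi$ is periodic with threshold and period growing by at most a single-exponential factor in $n$ per nesting level. Combined with \cref{cl:TotalPeriodicity}, it therefore suffices to treat the case $\phi = \psi_1\,UA\,\psi_2$, under the inductive hypothesis that $\psi_1$ and $\psi_2$ are $(\thresh(\psi_i), \period(\psi_i))$-periodic with single-exponential thresholds and periods.

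For this case, the key observation is that the satisfaction of $\phi$ at a configuration $(s, v)$ is determined by the sequence of \emph{level sets} $L_j(s,v) = \{(s',v') : \text{some valid path of length } j \text{ from } (s,v) \text{ ends at } (s',v')\}$, $j \geq 0$. Indeed, $(s,v) \models \phi$ iff there exists $k$ such that every configuration in $L_k(s,v)$ satisfies $\psi_2$ and, for every $j < k$, every configuration in $L_j(s,v)$ satisfies $\psi_1$. Setting $P = \lcm(\period(\psi_1),\period(\psi_2))$ and $T = \max(\thresh(\psi_1),\thresh(\psi_2))$, the inductive hypothesis reduces these assertions to depending only on the \emph{residue profile} of $L_j(s,v)$, namely the image of $L_j(s,v)$ under the map sending $(s',v')$ to itself if $v' < T$ and to $(s', v' \bmod P)$ otherwise. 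Since there are only single-exponentially many residue profiles, it suffices to show that the sequence of residue profiles is eventually periodic in $v$.

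The crux is to establish that, for each $s$, the map $v \mapsto (\mathrm{residue\ profile\ of\ } L_j(s,v))_{j \geq 0}$ is eventually periodic past a single-exponential threshold $\cT$, with single-exponential period $\P$. This is the segmented periodicity claim \cref{cl:Periodicity} alluded to in the introduction. Using \cref{cl:short_LPS}, every valid path of length $j$ from $(s,v)$ is $\pi$-shaped for a linear path scheme $\pi$ of flat-length $\mathrm{poly}(n)$, so $L_j(s,v)$ decomposes into contributions from polynomially many LPS shapes, each formed by a bounded prefix followed by iterations of short cycles. Once $v$ exceeds a threshold ensuring that none of these cycles is interrupted by a zero-test, replacing $v$ by $v + \P$ for an appropriate $\P$ that is a multiple of $P$ shifts each LPS-contribution uniformly along the counter axis, leaving the residue profile invariant. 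Taking $\lcm$ over the polynomially many shapes yields $\cT$ and $\P$ single-exponential in $n, T, P$, and hence single-exponential in $n$ and the nesting depth of $\phi$.

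The principal obstacle is establishing \cref{cl:Periodicity} itself. Unlike the CTL-only setting of~\cite{GL13}, where path lengths can be ignored and standard pumping arguments suffice, here we must track the exact length of every path in $L_j(s,v)$ while the counter may approach zero in any given path. The argument invokes the 2-TVASS framework underlying \cref{cl:OCAbyTVASS} together with a careful cycle analysis, partitioning each level set into a bounded ``near-zero'' portion that is common to all starting counters above the threshold and ``bulk'' portions whose counter distributions shift uniformly with $v$.
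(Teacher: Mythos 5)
Your reduction to the $UA$ case, and the idea of working with the sequence of level sets modulo the subformulas' thresholds and periods, are the right starting points. However, the crux of your argument contains a genuine gap.

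You assert that the map $v \mapsto (\text{residue profile of }L_j(s,v))_{j\geq 0}$ is eventually periodic in $v$ past a threshold $\cT$, i.e., for $v>\cT$ the \emph{entire sequence} of residue profiles of $(L_j(s,v))_j$ coincides with that of $(L_j(s,v+\P))_j$. This is false, and the counterexample is simple: let the only transition from $s$ on $\gZ$ be a decrement back to $s$, and on $\eqZ$ a move to a state $t$ with a different label. Then $L_{v+1}(s,v)=\{(t,0)\}$ while $L_{v+1}(s,v+\P)=\{(s,\P-1)\}$, and these disagree even as residue profiles because the \emph{states} differ. No threshold on $v$ can repair this: a path of length $\geq v$ that decrements at every step reaches the zero-tested region regardless of how large $v$ is, so your phrase ``once $v$ exceeds a threshold ensuring that none of these cycles is interrupted by a zero-test'' cannot be made literal, and it is exactly the paths that hit zero that drive the whole difficulty.

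What is actually true, and what \cref{cl:Periodicity} establishes, is weaker and more delicate: there is a partition of the set of \emph{levels} into $O(b^2)$ segments, each with a short ``core'' of length $\sT$, such that (i) outside the core, level $\ell$ of the $(s,v)$-tree relates to level $\ell-\P$ of the \emph{same} tree (vertical periodicity), and (ii) inside the core, level $\ell$ of the $(s,v)$-tree relates to level $\shift(\ell)$ of the $(s,v+\P)$-tree, where $\shift$ is a non-trivial bijection that stretches the level index by $\frac{\P}{-\es_i}$ in the $i$-th segment. The shift is forced precisely because the point at which a maximally decreasing path reaches counter value $\Prev{\cT}$ moves by about $\frac{\P}{-\es_i}$ levels when $v$ is increased by $\P$. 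Your argument never introduces this level re-indexing and therefore cannot be completed as stated; the LPS decomposition and the ``shift each LPS-contribution uniformly along the counter axis'' intuition are sound ingredients, but they must be combined with the segment/core decomposition and the $\shift$ map to obtain a correct correspondence between the two computation trees.

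Separately, your final paragraph proposes partitioning each \emph{level set} into ``near-zero'' and ``bulk'' portions; the paper instead partitions the set of \emph{level indices} into segments and a core. These are different decompositions, and only the latter supports the cycle-insertion/removal arguments of \cref{cl:CycleCombinationForElr,cl:CycleCombinationForLength} that make the shortening and lengthening of basic paths work while respecting validity near zero.
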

Before we delve into the proof of~\cref{cl:PeriodicityUA}, we show how it implies our main result.
\begin{theorem}
	\label{cl:CTL_UA_Mode_checking_is_decidable}
	The model-checking problem for \CUA is decidable in $\textsc{EXP}^{\textsc{NEXP}}$.
\end{theorem}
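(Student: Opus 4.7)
The plan is to combine \cref{cl:PeriodicityUA} with a dedicated \CUA model-checking procedure on a finite Kripke structure of single-exponential size. First, applying \cref{cl:PeriodicityUA} inductively to each subformula of $\phi$ yields that every subformula is periodic with threshold and period single-exponential in $n$ and in the nesting depth of $\phi$. Taking the maximum of all these thresholds and the $\lcm$ of all these periods, \cref{cl:TotalPeriodicity} guarantees that $\phi$ itself is totally $(\cT,\P)$-periodic for such single-exponential $\cT$ and $\P$. I would then invoke \cref{cl:OCAasKripke} to reduce $\cA \models \phi$ to $\cK \models \phi$, where $\cK$ is a finite Kripke structure of size $N := |\cA| \cdot (\cT + \P)$, which is single-exponential in $n$ and in the nesting depth of $\phi$.

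Next, I would establish that \CUA model checking over any Kripke structure of size $N$ is in $\complP^\complNP$. Standard CTL operators are resolved by the classical polynomial-time bottom-up labeling algorithm, so the only new ingredient is the $UA$ operator. To decide whether $s \models \psi_1\, UA\, \psi_2$ in the Kripke structure, I would observe that, since the sequence of ``set of states reachable from $s$ in exactly $j$ steps'' ranges over the $2^N$ subsets of states and is eventually periodic, if any witnessing bound $k$ exists then one exists with $k \le 2^{N+1}$, hence encodable in polynomially many bits. A guessed $k$ (in binary) can then be verified in polynomial time in $N$: Boolean matrix exponentiation by repeated squaring determines the set of states reachable from $s$ in exactly $k$ steps, and the intermediate $\psi_1$-condition is equivalent to saying that the shortest path from $s$ to every $\neg\psi_1$-labelled state has length at least $k$, which is checkable by BFS. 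This places the evaluation of a single $UA$-subformula in \complNP, and the overall bottom-up labeling in $\complP^\complNP$ in $N$.

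Combining these two ingredients yields the claimed bound: on an input Kripke structure of size $N = 2^{\mathrm{poly}(n)}$, a $\complP^\complNP$ procedure runs in deterministic time $2^{\mathrm{poly}(n)}$ with \complNP\ oracle queries of exponential size, each resolvable in $\textsc{NEXP}$ in $n$. Hence the overall algorithm is in $\textsc{EXP}^{\textsc{NEXP}}$. For a fixed OCA and fixed nesting depth, $N$ becomes polynomial and the bound collapses to $\complP^\complNP$, in line with the remark preceding the theorem. The main obstacle I foresee is arguing that the verification of a guessed exponentially large $k$ can be carried out in polynomial time; this is made possible by reformulating the conjunction of exponentially many subset-inclusion checks on intermediate levels as a single shortest-path comparison, which is what keeps $UA$ in \complNP\ rather than in $\complNP^\complNP$.
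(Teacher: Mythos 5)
Your proposal is correct and follows the same overall structure as the paper's proof: use \cref{cl:PeriodicityUA} together with \cref{cl:TotalPeriodicity} and \cref{cl:OCAasKripke} to reduce to model checking over a single-exponential-size Kripke structure, then establish a $\complP^\complNP$ bound for \CUA over finite Kripke structures, and relativize to obtain $\textsc{EXP}^{\textsc{NEXP}}$. The one genuine difference is in the second step: the paper simply cites the proof of~\cite[Theorem~1]{CD16} for the $\complP^\complNP$ bound on Kripke structures, whereas you reprove it directly. Your argument is sound --- in particular the reformulation of the intermediate $\psi_1$-condition as a shortest-path constraint is exactly what makes the verification of a guessed (binary-encoded, exponentially large) $k$ polynomial, and the $2^{N+1}$ bound on a witnessing $k$ follows from the eventual periodicity of the level-set sequence together with monotonicity of the prefix condition. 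This self-contained route buys a little more transparency at the cost of re-deriving a known result; the paper's route is shorter but leans on an external reference.
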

\begin{proof}
	Consider a \CUA formula $\phi$ and an OCA $\cA$. By~\cref{cl:PeriodicityUA} we can compute $\cT,\P$ single exponential in $|\phi|$ and $|\cA|$, such that $\phi$ is $(\cT,\P)$-periodic. We then apply~\cref{cl:model checking periodic formulas} to reduce the problem to model checking $\phi$ against a Kripke structure $\cK$ of size $|\cA|\cdot (\cT+\cP)$. 
	
	Finally, the proof of~\cite[Theorem 1]{CD16} shows that model checking the \CUA fragment can be done in $\complP^\complNP$ in the size of the Kripke structure and the formula, yielding an $\textsc{EXP}^{\textsc{NEXP}}$ bound in our setting.
\end{proof}
\begin{remark}[Lower Bounds]
	The \PSPACE-hardness of model checking CTL over OCA from~\cite{GL13} immediately implies \PSPACE-hardness in our setting as well. However, tightening the gap between the lower and upper bounds remains an open problem. 
	\end{remark}
The remainder of the paper is devoted to proving~\cref{cl:PeriodicityUA}.

\Subject{Cycle Manipulation and Slope Manipulation}
A fundamental part of our proof involves delicately pumping and removing cycles to achieve specific counter values and/or lengths of paths. We do this with the following technical tools.

For a path $\tau$, we define the \emph{slope} of $\tau$ as $\frac{\effect{\tau}}{|\tau|}$.
Recall that a basic path is of the form $\tau=\alpha_0\beta_1^{e_1}\alpha_1\cdots\beta_k^{e_k}\alpha_k$ adhering to some LPS, where $k=O(|\cA|)$ and each $\alpha_i$ and $\beta_i$ is of length $poly(|\cA|)$. 
We denote by $b$ the maximum flat length of any LPS for a basic path. 
In particular, $b$ bounds the flat length of the LPS, the size of it, and the length of any cycle or path in it.

We call a cycle \emph{basic} if it is of length at most $b$.
A slope of a path is \emph{basic} if it may be the slope of a basic cycle, namely if it equals $\frac{x}{y}$, where $x\in[-b..b], y\in[1..b]$ and $|x|\leq y$. We denote the basic slopes by $\es_i$, starting with $\es_1$ for the smallest.
For example for $b = 3$, the basic slopes are $(\es_1\!\!=\!\!-1,\, \es_2\!\!=\!\!-\frac{2}{3},\, \es_3\!\!=\!\!-\frac{1}{2},\, \es_4\!\!=\!\!-\frac{1}{3},\, \es_5\!\!=\!\!0,\, \es_6\!\!=\!\!\frac{1}{3},\, \es_7\!\!=\!\!\frac{1}{2},\, \es_8\!\!=\!\!\frac{2}{3},\, \es_9\!\!=\!\!1)$.
Observe that for every $i$, we have $|\es_i|\ge \frac1b$, and for every $j>i$, we have $\es_j-\es_i\ge \frac{1}{b^2}$ and when they are both negative, also $\frac{\es_j}{\es_i}\leq1-\frac{1}{b^2}$.

\begin{proposition} \label{cl:CycleCombinationForElr}
	Consider a basic path $\tau$ and basic cycles $c_1, c_2, c_3$ in $\tau$ with effects $e_1, e_2, e_3\in[-b..b]$, respectively, and lengths $\ell_1, \ell_2, \ell_3\in[1..b]$, respectively, such that $\frac{e_1}{\ell_1} \leq \frac{e_2}{\ell_2} \leq \frac{e_3}{\ell_3}$. Then there are numbers $k_1,k_3\in[0..b^2]$, such that the combination of $k_1$ repetitions of $c_1$ and $k_3$ repetitions of $c_3$ yield an effect and length whose ratio is $\frac{e_2}{\ell_2}$.
\end{proposition}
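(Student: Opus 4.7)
The plan is to unwrap the statement into a concrete Diophantine equation and solve it explicitly. Starting from the target identity $\frac{k_1 e_1 + k_3 e_3}{k_1 \ell_1 + k_3 \ell_3} = \frac{e_2}{\ell_2}$, cross-multiplication turns it into the single linear equation $k_1 A = k_3 B$, where $A := e_2 \ell_1 - e_1 \ell_2$ and $B := e_3 \ell_2 - e_2 \ell_3$ are integers. The slope-ordering hypothesis translates verbatim into $A \geq 0$ and $B \geq 0$, which is exactly what is needed for the existence of a non-negative integer solution.

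I would then split off two degenerate cases. If $A = 0$, equivalently $\frac{e_1}{\ell_1} = \frac{e_2}{\ell_2}$, the pair $(k_1, k_3) = (1, 0)$ already realises the required ratio; symmetrically, if $B = 0$, take $(k_1, k_3) = (0, 1)$. In the generic case both $A, B > 0$, and the minimal positive solution is
\[ k_1 \;=\; \frac{B}{\gcd(A,B)}, \qquad k_3 \;=\; \frac{A}{\gcd(A,B)}, \]
for which $k_1 A = k_3 B$ is immediate; substituting back recovers the identity $\frac{k_1 e_1 + k_3 e_3}{k_1 \ell_1 + k_3 \ell_3} = \frac{e_2}{\ell_2}$.

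It remains to verify that $k_1, k_3 \in [0..b^2]$. The bounds $|e_i| \leq \ell_i \leq b$ give only the loose estimate $A, B \leq 2b^2$, so a more careful argument is needed to reach the tight constant. My plan here is to exploit that in the genuinely non-degenerate situation both strict inequalities $\es_1 < \es_2 < \es_3$ hold, and in particular $\es_2 \in (-1, 1)$, so the triples $(e_i, \ell_i)$ cannot simultaneously saturate the extremes that would make $A$ and $B$ both close to $2b^2$. A short case split on whether $e_1$ and $e_3$ share a sign, together with the division by $\gcd(A, B)$, should place both $k_1$ and $k_3$ inside $[0..b^2]$.

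The main obstacle is precisely this final constant-tightening step. The existence of a solution, and the explicit formula for it, are immediate once the problem is framed as $k_1 A = k_3 B$; what takes some work is arguing that the minimal positive solution already lives inside the tight window $[0..b^2]$, rather than the naive $[0..2b^2]$. If downstream uses of this proposition only require a bound of the form $O(b^2)$, the whole argument collapses to the Diophantine observation above.
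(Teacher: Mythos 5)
Your approach is essentially the paper's: cross-multiply to reduce the target ratio to $k_1 A = k_3 B$ with $A = e_2\ell_1 - e_1\ell_2 \geq 0$ and $B = e_3\ell_2 - e_2\ell_3 \geq 0$, dispatch the degenerate cases $A=0$ or $B=0$ with $(k_1,k_3)=(1,0)$ or $(0,1)$, and otherwise exhibit an explicit positive solution; the paper simply takes $k_1 = B$ and $k_3 = A$ (skipping your $\gcd$-reduction) and verifies the ratio by the same cancellation. Your concern about the $[0..b^2]$ bound is well-founded, but your proposed fix does not close it: with $b=5$, $(e_1,\ell_1)=(-4,5)$, $(e_2,\ell_2)=(-3,4)$, $(e_3,\ell_3)=(4,5)$ the slopes satisfy $-4/5 < -3/4 < 4/5$, and one gets $A=1$, $B=31$, $\gcd(A,B)=1$, so $k_1=31>b^2=25$ even after dividing by the $\gcd$. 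The honest bound from $|e_i|\leq\ell_i\leq b$ is $k_1,k_3\in[0..2b^2]$, the factor $2$ arising exactly when $e_2,e_3$ (respectively $e_1,e_2$) differ in sign, so a sign case-split cannot rescue the constant $b^2$ either; this appears to be a small slip in the proposition as stated. As you suspected, it is immaterial: what is actually used downstream is the combined length $k_1\ell_1+k_3\ell_3 = \ell_2(\ell_1 e_3 - \ell_3 e_1) \leq 2b^3$, which is precisely why the period constant is set to $\lcm[1..2b^3]$, so replacing $b^2$ by $2b^2$ in the proposition leaves every subsequent estimate intact.
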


\begin{proposition} \label{cl:CycleCombinationForLength}
	Consider a path $\pi$ with cycles $c_1, c_2$ with effects $e_1, e_2\in[-b..b]$ and lengths $\ell_1, \ell_2\in[1..b]$, respectively, such that $\frac{e_1}{\ell_1} < \frac{e_2}{\ell_2}$, and a length $x$ that is divisible by $\lcm[1..2b^2]$. Then there are numbers $k_1,k_2\in[0..b\cdot x]$, such that the addition or removal of $k_1$ repetitions of $c_1$ and the addition or removal of $k_2$ repetitions of $c_2$ yield a path of the same effect as $\pi$ and of a length shorter or longer, as desired, by $x$ (provided enough cycle repetitions exist).
\end{proposition}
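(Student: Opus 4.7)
The plan is to reduce the claim to a small linear system over the integers in the unknowns ``how many copies of $c_1$ and $c_2$ to add or remove''. Let $a_1, a_2 \in \bbZ$ denote the signed numbers of copies of $c_1, c_2$ we will splice into or out of $\pi$, with the convention that positive values mean addition and negative values mean removal. The two requirements --- that the total effect be preserved and that the total length change by exactly $\pm x$ --- translate into
\[
a_1 e_1 + a_2 e_2 = 0, \qquad a_1 \ell_1 + a_2 \ell_2 = \pm x.
\]
Once we find a suitable $(a_1, a_2)\in \bbZ^2$ with $|a_i| \le b\cdot x$, we will set $k_i=|a_i|$ and interpret the sign of $a_i$ as telling us whether to add or to remove. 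The caveat ``provided enough cycle repetitions exist'' takes care of the situation where $a_i$ is negative, by assuming the path $\pi$ carries sufficiently many copies of $c_i$ to begin with.

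First I would handle the generic case $e_1, e_2 \neq 0$. The determinant of the coefficient matrix is $e_1\ell_2 - e_2\ell_1$, which is nonzero (in fact strictly negative) since $\frac{e_1}{\ell_1}<\frac{e_2}{\ell_2}$ implies $e_1\ell_2 < e_2\ell_1$. Set $D = \ell_1 e_2 - \ell_2 e_1 > 0$; by Cramer's rule the unique rational solution is
\[
a_1 = \pm \frac{x\, e_2}{D}, \qquad a_2 = \mp \frac{x\, e_1}{D}.
\]
Because $|e_i|\le b$ and $|\ell_i|\le b$, we have $1 \le D \le 2b^2$, so $D$ divides $\lcm[1..2b^2]$ and hence divides $x$. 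This makes $a_1, a_2$ integers, and from $|e_i|\le b$ we also get $|a_i| \le b\cdot (x/D) \le b\cdot x$, matching the quantitative bound required. Note that the signs of $a_1, a_2$ are dictated by the signs of $e_2, e_1$, so the claim about ``addition or removal'' must allow for either operation independently for each cycle, which is exactly what the statement permits.

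Next I would cover the degenerate cases where one or both of $e_1, e_2$ vanish. If $e_1=e_2=0$, the effect equation is automatic, and we simply take $a_1=0$ and $a_2 = \pm x/\ell_2$; since $\ell_2\in[1..b]\subseteq[1..2b^2]$, divisibility of $x$ by $\lcm[1..2b^2]$ forces $a_2$ to be an integer bounded by $x$. If exactly one of $e_1, e_2$ is zero, say $e_1=0$ and $e_2\ne 0$, then $a_1 e_1 + a_2 e_2 = 0$ forces $a_2=0$ and we put $a_1 = \pm x/\ell_1$; symmetrically when $e_2 = 0$. In every case the resulting $(k_1,k_2)=(|a_1|,|a_2|)$ lies in $[0..b\cdot x]^2$, yielding a modified path with the same effect and the desired length change, which completes the proof.

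The main obstacle here is not algebraic but bookkeeping: one must verify that the chosen $\lcm[1..2b^2]$-divisibility of $x$ really kills every denominator that could appear (including the degenerate cases where the system collapses into a single scalar equation with denominator $\ell_i$), and that the sign pattern of $(a_1,a_2)$ is compatible with the ``addition or removal'' flexibility. Both points are handled uniformly by the bound $D\le 2b^2$ on the determinant and by the fact that the slopes $\frac{e_1}{\ell_1}, \frac{e_2}{\ell_2}$ are strictly ordered, which guarantees that the system is nondegenerate precisely when both $e_i\ne 0$.
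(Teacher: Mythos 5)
Your proof is correct and takes essentially the same approach as the paper: both set up the $2\times 2$ linear system ``zero net effect, prescribed length change'' and observe that the determinant $\ell_1 e_2 - \ell_2 e_1$ lies in $[1..2b^2]$ (hence divides $x$), with the paper phrasing the solution via a case split on the sign pattern of $(e_1,e_2)$ and you phrasing it directly via Cramer's rule with signed unknowns. One small slip at the end: the system is nondegenerate whenever the slopes differ (equivalently, whenever not both $e_i=0$), not ``precisely when both $e_i\ne0$''; in fact the hypothesis $\frac{e_1}{\ell_1}<\frac{e_2}{\ell_2}$ already rules out $e_1=e_2=0$, so your degenerate subcase never arises.
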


In order to prove~\cref{cl:PeriodicityUA}, we show that every computation tree of $\cA$, starting from a big enough counter value $v>\cT$, has a `segmented periodic' structure with respect to $\phi$. 
That is, we can divide its levels into 
$poly(n)$
many \emph{segments}, such that only the first $\sT\in Exp(n,|\phi|$) levels in each segment are in the `core', while every other level $\ell$ is a sort-of repetition of the level $\ell-\P$, for a \emph{period} $\cP$.
We further show that there is a similarity between the cores of computation trees starting with counter values $v$ and $v+\P$.
We depict the segmentation in~\cref{fig:constants}, and formalize it as follows.

Consider a formula $\phi = \psi_1 UA\psi_2$, where $\psi_1$ and $\psi_2$ are $(\cT(\psi_1),\P(\psi_1))$- and $(\cT(\psi_2),\P(\psi_2))$-periodic, respectively. We define several constants to use throughout the proof.
\begin{itemize}[topsep=0pt]
	\item[] \hspace{-18pt} \textbf{Constants depending only on the number $n$ of states in the OCA}
	\item $b\in Poly(n)$: the bound on the length of a linear path scheme on $\cA$.
	\item $B=\lcm[1..2b^3]$.
%
	\item[] \hspace{-18pt} \textbf{Constants depending on $n$ and the CTL+UA formula $\phi$}
	\item $\Prev{\P}(\phi)=\lcm(\P(\psi_1),\P(\psi_2))$ -- the unified period of the subformulas.
	\item $\Prev{\cT}(\phi)=\max(\cT(\psi_1),\cT(\psi_2))$ -- the unified counter threshold of the subformulas.
	\item  $\P(\phi)=B\cdot\Prev{\P}(\phi)$ -- the `period' of $\phi$.
	\item $\sT(\phi)=b^9 \cdot \P(\phi)$ -- the `segment threshold' of $\phi$.
	\item  $\cT(\phi)= b^{11}\cdot \P(\phi)$ -- the `counter threshold' of $\phi$.
\end{itemize}
Eventually, these periodicity constants are plugged into the inductive cases of~\cref{cl:SL13_Theorem1}, as shown in the proof of \cref{cl:PeriodicityUA}. Then, all the constants are single-exponential in $n$ and the nesting depth of $\phi$.
Notice that the following relationship holds between the constants.
\begin{proposition}\label{cl:PeriodBiggerThanPreviousThreshold}
	$\P(\phi)>\Prev{\cT}(\phi)$ for $b\ge 2$. 
\end{proposition}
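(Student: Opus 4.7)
The plan is a routine unfolding of the definitions, exploiting the super-polynomial growth of $B = \lcm[1..2b^3]$ relative to the factors controlling the ratio $\cT(\psi)/\P(\psi)$.

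First, I would establish, by structural induction on the subformulas of $\phi$, a bound of the form $\cT(\psi)\leq C\cdot\P(\psi)$ for some $C$ polynomial in $b$. For atomic propositions and Boolean combinations this is immediate from the base cases of \cref{cl:SL13_Theorem1}. For $EX\psi$, the recursion gives $\cT(EX\psi)/\P(EX\psi) = (\cT(\psi)+\P(\psi))/(K\cdot\P(\psi))$, which stays bounded since $K = \lcm[1..n]\geq 2$. For $E\psi_1 U\psi_2$ and $A\psi_1 U\psi_2$, the recursion contributes only an additive $2k^2\leq 2n^2$ per layer to the ratio, while $\P$ is simultaneously scaled by a factor of at least $K\geq 2$; the accumulated ratio therefore remains polynomial in $b$. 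For the $UA$ operator itself, $\cT(\phi)=b^{11}\cdot\P(\phi)$ by the explicit definition, so the bound holds with $C = b^{11}$.

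With the invariant in hand, the proposition reduces to a direct comparison: $\Prev{\cT}(\phi) = \max(\cT(\psi_1),\cT(\psi_2))\leq C\cdot\max(\P(\psi_1),\P(\psi_2))\leq C\cdot\Prev{\P}(\phi)$, whereas $\P(\phi) = B\cdot\Prev{\P}(\phi)$. The claim then becomes $B > C$. Using the classical estimate $\lcm[1..m]\geq 2^{m-1}$ (valid for modest $m$), we obtain $B = \lcm[1..2b^3]\geq 2^{2b^3-1}$, which dominates any polynomial in $b$ and in particular comfortably exceeds $b^{11}$. Since $b\geq n\geq 3$, the gap is huge, and we conclude $\P(\phi) > \Prev{\cT}(\phi)$.

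I expect the main obstacle to be verifying the inductive bound in the $EU$/$AU$ cases of \cref{cl:SL13_Theorem1}, where the additive $2k^2\cdot\P(\phi)$ term must be balanced against the multiplicative $K$ factor in $\P$. This reduces to elementary arithmetic but requires care to keep the accumulated ratio polynomial in the nesting depth rather than exponential. The remainder of the argument is routine bookkeeping driven by the extreme gap between $B$ and any polynomial in $b$.
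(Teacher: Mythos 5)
Your approach has a gap that the paper sidesteps entirely by a different bookkeeping choice. The trouble is in the $EU$/$AU$ case of your inductive bound. You claim that ``$\P$ is simultaneously scaled by a factor of at least $K\geq 2$'' and conclude the ratio $\cT(\psi)/\P(\psi)$ stays polynomial in $b$. But the recursion from \cref{cl:SL13_Theorem1} is $\period(\phi)=\lcm(K\cdot \period(\psi_1),\period(\psi_2))$, which guarantees a $K$-factor over $\period(\psi_1)$ but only $\period(\phi)\geq \period(\psi_2)$ with no guaranteed growth at all over the $\psi_2$ branch. So writing $C(\psi)=\cT(\psi)/\P(\psi)$, you only get $C(\phi)\leq\max(C(\psi_1)/K,\,C(\psi_2))+2k^2$, and a right-nested chain $E\,p\,U(E\,p\,U(\cdots))$ gives $C$ growing \emph{linearly in the nesting depth}, not bounded by a polynomial in $b$. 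Your bound $C\leq b^{11}$ therefore does not follow, and the final step $B>C$ is unsupported for formulas with deep CTL nesting inside a $UA$ operand.

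The paper avoids all of this. It does not use the raw \cref{cl:SL13_Theorem1} constants for the subformulas at all: as made explicit in the proof of \cref{cl:PeriodicityUA}, the algorithm applies the $UA$-case formulas $\cT(\psi)=b^{11}\cdot\P(\psi)$ and $\P(\psi)=B\cdot\Prev{\P}(\psi)$ uniformly at \emph{every} operator, which is sound because they dominate the CTL cases' requirements (larger threshold, period divisible by the CTL period). Under that convention the ratio is pinned to exactly $b^{11}$ for every non-atomic subformula (and is $0$ for atoms), so the paper's proof is just $\Prev{\cT}(\phi)=\max(\cT(\psi_1),\cT(\psi_2))\leq b^{11}\max(\P(\psi_1),\P(\psi_2))<B\cdot\lcm(\P(\psi_1),\P(\psi_2))=\P(\phi)$, using only $B=\lcm[1..2b^3]>b^{11}$. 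If you want to pursue your route of tracking the genuine \cref{cl:SL13_Theorem1} constants, you would have to either bound $C$ in terms of the nesting depth and then compare that to $B$ explicitly, or first observe (as the paper does) that one is free to inflate the CTL thresholds and periods to the $UA$-case values before invoking the proposition.
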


When clear from the context, we omit the parameter $\phi$ from $\P, \sT, \cT, \Prev{\P},\Prev{\cT}$.

We provide below an intuitive explanation for the choice of 
constants above.

\begin{figure}[ht]
	\includegraphics[scale=.85]{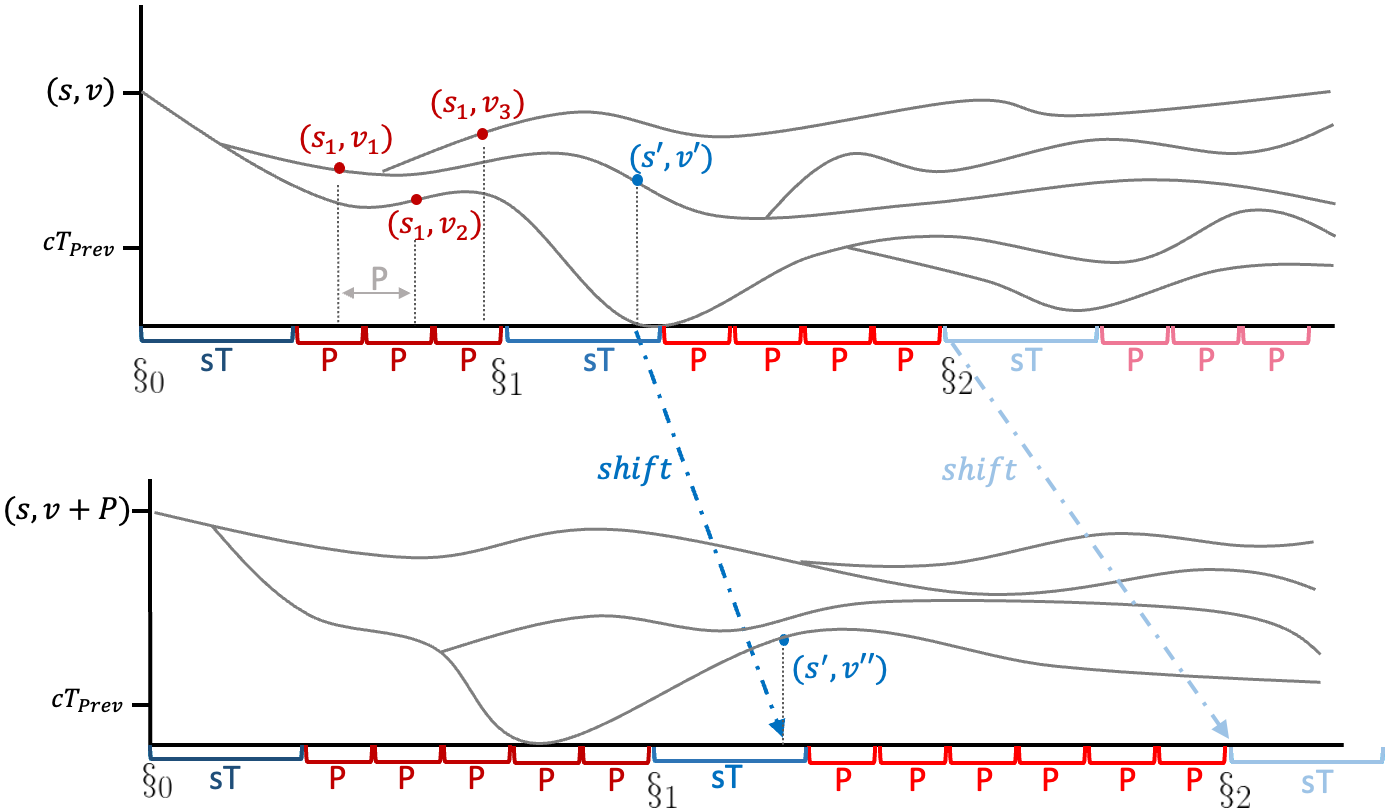}
	\caption{A computation tree from $(s,v)$ above and from $(s,v+\P)$ below, demonstrating the \emph{segmented periodicity}. Each segment $i$ starts at $\segStart_i$ and ends at $\segStart_{i+1}-1$ (except for the last, which never ends).
	The \emph{core} of the computation tree is the union of the first $\sT$ positions from each segment. Following these $\sT$ positions, there is within each segment periodicity of length $\P$, meaning that for every path of length $\ell$, there exists an equivalent path of length $\ell+\P$, as shown for the three points with the same state $s$ and equivalent counter values $v_1\equiv v_2\equiv v_3$.
	Between the trees, there is equivalence between the core positions, mapped via the $\shift$ function, as shown for the two points with the same state $s'$ and the equivalent counter values $v'\equiv v''$.
	}
	\label{fig:constants}
\end{figure}

\Subject{Intuition for the period $\P$}
The period has two different roles: \emph{level-periodicity} within each segment of a computation tree, and \emph{counter-value periodicity} between two computation trees starting with different counter values.

\emph{Level periodicity within a segment}: For lengthening or shortening a basic path by $\P$, we add and/or remove some copies of its cycles. Adding or removing $\Prev{\P}$ copies of the same cycle guarantees that the end counter values of the original and new paths are equivalent modulo $\Prev{\P}$. Since the cycles in a basic path are of length in $[0..b]$, setting $\P$ to be divisible by $\Prev{\P}\cdot \lcm[1..b]$, allows to add or remove $\frac{\P}{|c|}$ copies of a cycle $c$, where $\frac{\P}{|c|}$ is divisible by $\Prev{\P}$, as desired. Yet, we might need to add copies of one cycle and remove copies of another, thus, as per \cref{cl:CycleCombinationForLength}, we need $\P$ to be divisible by $\Prev{\P}\cdot lcm[1..2b^2]$.

\emph{Counter periodicity between computation trees}: We change a path $\tau$ that starts with a counter value $v$ to a path that starts with a counter value $v+\P$, or vice versa, by lengthening or shortening it by $\frac{\P}{\es}$, respectively, where $\es$ is a positive basic slope. In some cases, we need to also make sure that the longer or shorter path has a drop bigger or smaller, respectively, than $\tau$ by exactly $\P$.

As $\frac{\P}{\es}$ is bounded by $b\cdot\P$, if there are at least $b\cdot\P$ repetitions of a cycle $c$ in $\tau$ whose slope is $-\es$, we can just add or remove $\frac{b\cdot\P}{|c|}$ copies of $c$, so we need $\P$ to be divisible by $\Prev{\P}\cdot \lcm[1..b]$, for guaranteeing that the counter values at the end of the original and new paths are equivalent.
Yet, in some cases we need to combine two cycles, as per \cref{cl:CycleCombinationForElr}. As the combination of the two cycles might be of length up to $2b^3$, we need $\P$ to be divisible by $\Prev{\P}\cdot \lcm[1..2b^3]$.

\Subject{Intuition for the counter threshold $\cT$ and the segment threshold $\sT$}
In order to apply~\cref{cl:CycleCombinationForElr,cl:CycleCombinationForLength}, we need to have in the handled path many repetitions of two cycles of different slopes. We thus choose $\cT$ and $\sT$ to be large enough so that paths in which only one (negative) cycle slope is repeated many times must hit zero within a special region called the `core' of the tree, as defined below.



\Subject{The core of a computation tree}
For every counter value ${v}>\cT$, the `core' with respect to a fixed formula $\phi$, denoted by $\core({v})\subseteq \Nat$, of a computation tree of $\cA$ that starts with a counter value ${v}$ consists of $m+1<b^2$ segments, with each segment corresponding to a negative basic slope and having $\sT$ consequent numbers. 
For every $i\in[0..m]$, the start of Segment $i$ depends on the initial counter value $v$ of the computation tree, it is denoted by $\segStart_i(v)$, and it is defined as follows:
\begin{itemize}
	\item $\segStart_0(v)=0$,
	\item For $i\in[1..m]$, we set $\segStart_{i}(v)= \frac{-1}{\es_{i}}(v-\Prev{\cT})-b^8\cdot \P$.
	\item For convenience, we also define $\segStart_{m{+}1}(v)=\infty$.
\end{itemize}
Then, we define $\core(v)=\bigcup_{i=0}^m[\segStart_i(v)..(\segStart_i(v)+\sT-1)]$.

Observe that the core of every tree is an ordered list of $((m+1)\cdot \sT)$ numbers (levels), while just the starting level of every segment depends on the initial counter value $v$.
We can thus define a bijection $\shift:\core(v)\to\core(v+\P)$ that maps the $i$-th number in $\core({v})$ to the $i$-th number in $\core({v+\P})$ (see also~\cref{fig:constants}).

Recall that we define $\sT$ so that, intuitively, if a path is long enough to reach $\segStart_i(v)+\sT$ without reaching counter value $\Prev{\cT}$, then the path must have many cycles with a slope larger than $\es_i$, and if the path manages to reach $\Prev{\cT}$ before $\segStart_{i+1}(v)$ (namely the end of Segment $i$), then it must have many cycles with a slope at least as small as $\es_i$. This is formalized as follows.

\begin{lemma} \label{cl:SegmentsCycles}
	Let $\tau$ be a basic path, or a prefix of it, of length $\ell$ starting from counter value $v>\cT$, and let $i\in[0..m]$.
	\begin{enumerate}
		\item \label{Item:SlowCycles} 
		If $\ell\ge \segStart_i(v)+\frac{\sT}{2}$ and the counter values of $\tau$ stay above $\Prev{cT}$, then $\tau$ has a cycle with slope $\es_j$ for $j> i$ that repeats at least $b^4\cdot \P$ times. 
		\item \label{Item:FastCycles} 
		If $\ell< \segStart_{i+1}(v)$ and the counter values of $\tau$ reach  $\Prev{cT}$, then $\tau$ has a cycle with slope $\es_j$ for $j\leq i$ that repeats at least $b^4\cdot \P$ times. 
	\end{enumerate}
\end{lemma}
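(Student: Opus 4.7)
The plan is to prove each clause by contradiction, deriving an impossible effect-vs-length balance from the LPS representation of $\tau$. Since $\tau$ is basic (or a prefix of one), it conforms to an LPS of the form $\alpha_0\beta_1^{e_1}\alpha_1\cdots\beta_k^{e_k}\alpha_k'$ with $k\le b$, total $\alpha$-length $L_a:=\sum_j|\alpha_j|\le b$, and each $|\beta_j|\in[1..b]$; in particular, the $\alpha$-portion contributes length $L_a\le b$ and effect in $[-b,b]$, and each cycle slope is a basic slope.

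For Part~\ref{Item:SlowCycles}, I partition the total cycle-length of $\tau$ into $L_{>}$ (cycles of slope $>\es_i$) and $L_{\le}$ (cycles of slope $\le\es_i$). If no cycle of slope $>\es_i$ is repeated at least $b^4\P$ times, then $L_{>}<b^4\P\cdot b=b^5\P$ (using $\sum_j|\beta_j|\le b$) and the corresponding effect contribution lies in $[-b^5\P,b^5\P]$. Each cycle of slope $\le\es_i$ contributes effect at most $\es_i|\beta_j|$ per repetition, so the slow cycles together contribute effect at most $\es_i L_{\le}$. Combined with the $\alpha$-part, $\effect{\tau}\le b+b^5\P+\es_i L_{\le}$. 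The hypothesis that the counter stays above $\Prev{\cT}$ gives $\effect{\tau}\ge\Prev{\cT}-v$, hence $\es_i L_{\le}\ge\Prev{\cT}-v-b-b^5\P$; dividing by $\es_i<0$ and using $|\es_i|\ge 1/b$ yields $L_{\le}\le\frac{v-\Prev{\cT}}{-\es_i}+O(b^6\P)$, and therefore $\ell=L_a+L_{>}+L_{\le}\le\frac{v-\Prev{\cT}}{-\es_i}+O(b^6\P)$. But the hypothesis $\ell\ge\segStart_i(v)+\sT/2=\frac{v-\Prev{\cT}}{-\es_i}-b^8\P+\frac{b^9\P}{2}$ forces $\frac{b^9\P}{2}\le b^8\P+O(b^6\P)$, which fails for $b\ge 3$. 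The edge case $i=0$ (where $\es_0$ is undefined) is handled separately by pigeonhole: the at most $b$ cycles of length at most $b$ cover total cycle-length at least $\sT/2-b$, so some $e_j\ge(\sT/2-b)/b^2\ge b^4\P$.

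For Part~\ref{Item:FastCycles}, I dualize. Let $\tau'$ be the prefix of $\tau$ up to the first configuration with counter exactly $\Prev{\cT}$; this exists because the counter changes by at most $1$ per step, and then $\effect{\tau'}=\Prev{\cT}-v$ and $|\tau'|\le\ell<\segStart_{i+1}(v)$. Assume for contradiction that no cycle of slope $\le\es_i$ in $\tau$ repeats $b^4\P$ times; then the same holds in $\tau'$, giving $L'_{\le}\le b^5\P$ with effect in $[-b^5\P,0]$, while cycles of slope $>\es_i$ all have slope $\ge\es_{i+1}$, contributing effect $\ge\es_{i+1}L'_{>}$. Thus $\effect{\tau'}\ge-b-b^5\P+\es_{i+1}L'_{>}$. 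For $i+1\le m$ we have $\es_{i+1}<0$, and $L'_{>}\le|\tau'|<\frac{v-\Prev{\cT}}{-\es_{i+1}}-b^8\P$ together with $|\es_{i+1}|\ge 1/b$ gives $\es_{i+1}L'_{>}>-(v-\Prev{\cT})+b^7\P$, so $\effect{\tau'}>\Prev{\cT}-v+b^7\P-b^5\P-b$, contradicting $\effect{\tau'}=\Prev{\cT}-v$. For $i=m$, $\es_{i+1}\ge 0$, so the non-slow cycles contribute non-negative effect and $\effect{\tau'}\ge-b-b^5\P$; but $v>\cT=b^{11}\P$ together with $\Prev{\cT}<\P$ (by \cref{cl:PeriodBiggerThanPreviousThreshold}) gives $\effect{\tau'}<-b^{10}\P$, again a contradiction.

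The delicate part---and the reason for the particular exponents of $b$ in $\sT$, $\segStart_i(v)$, and $\cT$---is arranging the margins so that the ``bounded'' contributions from the alphas and from the few cycles repeated fewer than $b^4\P$ times are dominated by the $\sT/2$ slack in Part~\ref{Item:SlowCycles} and by the $b^8\P$ slack in Part~\ref{Item:FastCycles}. Once this margin analysis is in place, both clauses reduce to one chain of effect-vs-length inequalities on the LPS decomposition.
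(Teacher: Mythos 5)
Your proof is correct and takes essentially the same approach as the paper's: decompose the LPS into cycle classes by slope relative to $\es_i$, bound each class's length and effect (using that if no single cycle template repeats at least $b^4\P$ times, the class contributes total length below $b^5\P$ since $\sum_j|\beta_j|\le b$), and derive an effect-versus-length contradiction from the hypotheses on $\ell$ and the counter drop. Your explicit case $i=m$ in Part~2 (where $\es_{m+1}=0$ would make the paper's inequality $Y\ge\frac{-1}{\es_{i+1}}(v-X-\Prev{\cT})$ degenerate) is a small tidying-up, but otherwise the decomposition, the margin analysis, and the contradiction structure coincide with the paper's.
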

As a sanity check, the lemma states that if a path $\tau$ reaches $\Prev{\cT}$ for the first time at length $\ell\in[\segStart_1(v){+}\frac{\sT}{2}~..~ \segStart_2(v))$, then it has many cycles with slope $\es_1=-1$ (enough to decrease down to $\Prev{\cT}$ before $\segStart_2(v)$), as well as many cycle with slope at least $\es_2=\frac{-(b-1)}{b}$ (enough to keep above $\Prev{\cT}$ through $\segStart_1(v)+\frac{\sT}{2}$).
Observe also that a path cannot reach $\Prev{\cT}$ at Segment $0$, namely before  $\segStart_1(v)$.

\Subject{The segment and $\shift$ periodicity}
Consider a threshold $T$ and period $P$. We say that counter values $u,v$ are \emph{$(T,P)$-equivalent}, denoted by $u\equiv_{T,P}v$ if either $u,v\ge T$ and $P$ divides $|u-v|$, or $u,v<T$ and $u=v$. That is, either both $u,v$ are greater than $T$, in which case they are equivalent modulo $P$, or they are both smaller than $T$ and are equal.

The segment periodicity within a computation tree is then stated as Claim \cref{item:SegmentPeriodicity} in \cref{cl:Periodicity} below, while the similarity between computation trees starting from counters $v$ and $v+\P$ as Claim \cref{item:CoreShift}. (By $(s,v)\leads{\ell}(s',v')$ we mean that the computation tree starting with state $s$ and counter value $v$ has a path of length $\ell$ ending in state $s'$ and counter value $v'$.)

\begin{lemma}\label{cl:Periodicity}
	Consider states $s$ and $e$, a counter value $v>\cT$, an arbitrary counter value $u$, and an arbitrary path length $\ell$.
	\begin{enumerate}
		\item \label{item:SegmentPeriodicity} 
		If $\ell\not\in\core(v)$ then: 
		\begin{enumerate}
			\item \label{item:SegmentPeriodicityShortening} 
			$(s,v)\leads{\ell}(e,u) \implies (s,v)\leads{\ell-\P}(e,u')$, and 
			\item \label{item:SegmentPeriodicityLengthening} 
			$(s,v)\leads{\ell-\P}(e,u) \implies  (s,v)\leads{\ell}(e,\tilde{u})$,
		\end{enumerate}
		for  some counter values $u'$ and $\tilde{u}$, such that $u\equiv_{\Prev{\cT},\Prev{\P}} u'\equiv_{\Prev{\cT},\Prev{\P}}\tilde{u}$.
		\item\label{item:CoreShift}
		 If $\ell\in\core(v)$ then: 
		 \begin{enumerate}
			\item \label{item:CoreShiftShortening}
			$(s,v+\P)\leads{\shift(\ell)}(e,u) \implies  (s,v)\leads{\ell}(e,u')$, and
			\item \label{item:CoreShiftLengthening}
			$(s,v)\leads{\ell}(e,u) \implies (s,v+\P)\leads{\shift(\ell)}(e,\tilde{u})$  
	\end{enumerate}
		for  some counter values $u'$ and $\tilde{u}$, such that $u\equiv_{\Prev{\cT},\Prev{\P}} u'\equiv_{\Prev{\cT},\Prev{\P}}\tilde{u}$.
	\end{enumerate}
\end{lemma}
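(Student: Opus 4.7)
The plan is to prove all four implications by directly manipulating a basic witnessing path, relying on \cref{cl:short_LPS} to assume that the given path $\tau$ from $(s,v)$ has the form of an LPS whose cycles each have length at most $b$. The overall strategy is: (i)~identify the segment index $i$ determined by $\ell$; (ii)~invoke \cref{cl:SegmentsCycles} to locate cycles of sufficient multiplicity whose slopes bracket $\es_i$; and (iii)~apply \cref{cl:CycleCombinationForLength} (for Part~1) or \cref{cl:CycleCombinationForElr} together with replication (for Part~2) in order to resize $\tau$.

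For Part~1, the hypothesis $\ell\not\in\core(v)$ places $\ell$ in the periodic tail $[\segStart_i(v)+\sT,\,\segStart_{i+1}(v))$ of some segment $i$. I will split on whether the counter of $\tau$ stays above $\Prev{\cT}$ throughout or reaches it. In the former case, \cref{cl:SegmentsCycles}(1) yields $b^4\P$ copies of a cycle of slope $\es_j>\es_i$; in the latter case, \cref{cl:SegmentsCycles}(2) yields $b^4\P$ copies of a cycle of slope $\es_{j'}\le\es_i$. Combined with any other cycle of a distinct slope appearing in the LPS of $\tau$, \cref{cl:CycleCombinationForLength} then lets me shift the length of $\tau$ by exactly $\P$ while preserving its effect; the divisibility hypothesis of that proposition is met because $\P=B\cdot\Prev{\P}$ with $B=\lcm[1..2b^3]$. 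If the LPS happens to contain only a single slope class, I will instead remove (or add) $\P/|c|$ copies of the lone cycle $c$: this shifts the length by exactly $\P$ and the effect by $(B/|c|)\Prev{\P}\,e_c$, a multiple of $\Prev{\P}$, which is precisely the slack permitted by $\equiv_{\Prev{\cT},\Prev{\P}}$ when both endpoints exceed $\Prev{\cT}$.

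For Part~2, I must convert a path of length $\ell$ from $(s,v)$ into one of length $\shift(\ell)=\ell-\P/\es_i$ from $(s,v+\P)$ (and symmetrically in the opposite direction). Since the new starting counter is larger by $\P$ and the new length longer by $|\P/\es_i|$, to land at a counter $\equiv_{\Prev{\cT},\Prev{\P}}u$ I must insert a block whose total length is $|\P/\es_i|$ and total effect is $-\P$, i.e., of average slope exactly $\es_i$. The plan is to invoke \cref{cl:CycleCombinationForElr} on two cycles of $\tau$ whose slopes straddle $\es_i$ --- both present with multiplicity $b^4\P$ by \cref{cl:SegmentsCycles} applied to suitable prefixes of $\tau$ --- to build a short block of slope exactly $\es_i$, and then replicate this block enough times to amass the required $|\P/\es_i|$ length, with the divisibility of $\P$ by $B$ again ensuring that the replication count is an integer. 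The segment-$0$ case is degenerate: there $\shift(\ell)=\ell$, and running the original $\tau$ from $(s,v+\P)$ already arrives at $(e,u+\P)\equiv_{\Prev{\cT},\Prev{\P}}u$.

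The main obstacle will be establishing \emph{validity} of each manipulated path --- the counter must remain non-negative and respect the $\eqZ$/$\gZ$ conditions on transitions. This is precisely why the thresholds $\cT=b^{11}\P$ and $\sT=b^9\P$ are chosen so large: the $\cT$ buffer on the initial counter absorbs an effect change of magnitude up to $\P$ without driving the counter below~$0$, while the $\sT$ prefix within each segment provides enough altitude to splice in the auxiliary cycles while remaining strictly above $\Prev{\cT}$ (so that no $\eqZ$-labelled transition is fired inside the inserted block). A delicate subcase arises when the ending counter $u$ lies below $\Prev{\cT}$: the equivalence $\equiv_{\Prev{\cT},\Prev{\P}}$ then forces the effect to be preserved exactly, ruling out the single-slope shortcut in Part~1 and mandating a genuine two-cycle combination via \cref{cl:CycleCombinationForLength}, whose premise is exactly what \cref{cl:SegmentsCycles}(1) and~(2) jointly furnish in that regime.
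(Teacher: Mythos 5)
Your high-level strategy (basic path via LPS, segment analysis via \cref{cl:SegmentsCycles}, resizing via \cref{cl:CycleCombinationForElr,cl:CycleCombinationForLength}) matches the paper's, and your treatment of Segment~$0$ and of the constant $\P$ being divisible by $B=\lcm[1..2b^3]$ is correct. However, the sketch glosses over the part of the argument that actually carries the weight: handling validity in the presence of $\eqZ$-transitions. The paper's proof splits the witnessing path $\tau$ at the \emph{first} and \emph{last} positions $z_f\le z_u$ where the counter equals $\Prev{\cT}$, yielding $\tau_1,\tau_2,\tau_3$, and then performs the resizing entirely inside the part that is long, in a way that leaves the counter trajectory at the boundary of the other parts unchanged. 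This matters because a modification that alters the counter after a $\eqZ$-transition (or before it) would invalidate the path; your claim that the $\cT$-buffer ``absorbs an effect change of magnitude up to $\P$'' only addresses non-negativity, not zero-tests.

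The second and sharper gap is your claim that \cref{cl:SegmentsCycles}(1) and~(2) ``jointly furnish'' two cycles of distinct slope classes with high multiplicity. Claim~(1) applies only to a prefix whose counter \emph{stays above} $\Prev{\cT}$, i.e., to $\tau_1=\tau[0..z_f)$, and only if $|\tau_1|\ge \segStart_i(v)+\sT/2$; claim~(2) applies to a prefix that does reach $\Prev{\cT}$. So the joint conclusion is available precisely when $\tau_1$ alone is long (the paper's case~1a.2.3 / 2a.2.2, where it indeed derives the contradiction showing $\tau_1$ has both fast and slow cycles). When instead $\tau_2=\tau[z_f..z_u)$ or $\tau_3=\tau[z_u..\ell]$ carries most of the length, neither claim of \cref{cl:SegmentsCycles} applies, and your plan has no mechanism to produce the two cycles. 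The paper handles these cases with separate ad-hoc arguments: inside $\tau_2$, which starts and ends at $\Prev{\cT}$, any highly-repeating positive cycle must be flanked by a highly-repeating negative one (and vice versa), letting one apply \cref{cl:CycleCombinationForLength} in a region where altitude is controlled; inside $\tau_3$, similar reasoning exploits that the tail stays above $\Prev{\cT}$. You also omit the order-of-cycles subcases (which of $c_f,c_s$ appears first), which in the paper require a separate altitude bound to show the intermediate region never hits zero. Without these, your argument is incomplete.
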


	Throughout the proof, we will abbreviate $u\equiv_{\Prev{\cT},\Prev{\P}}u'$ by  $u\equiv u'$.
	We split the proof into four parts, each devoted to one of the four stated implications. 
	In each of them, we assume the existence of a path $\tau$ that witnesses the left side of the implication, say $(s,v)\leads{\ell}(e,u)$, and show that there exists a path $\tau'$ that witnesses the right side of the implication, say $(s,v)\leads{\ell-\P}(e,u')$, where $u\equiv u'$. By \cref{cl:short_LPS}, we assume that $\tau$ is a basic path.
	We present some of the cases; the remaining parts are in the appendix. 


\subsubsection*{Proof of \cref{cl:Periodicity}.\cref{item:SegmentPeriodicityShortening}}

Let $\tau$ be a basic path of length $\ell\notin \core(v)$ such that $(s,v)\leads{\ell}(e,u)$ via $\tau$.
We construct from $\tau$ a path $\tau'$ for $(s,v)\leads{\ell-\P}(e,u')$, such that $u\equiv u'$.
The proof is divided to two cases. 

\paragraph*{Case 1a.1: The counter values in $\tau$ stay above $\Prev{\cT}$}
If there is no position in $\tau$ with counter value $\Prev{\cT}$, then in particular $\tau$ has no zero-transitions.
Since $\ell\notin \core(v)$, then in particular $\ell\ge \sT>3b^5\P$.
Thus, there are at least $3b^4\P$ cycle repetitions in $\tau$. 

If there is a non-positive cycle $c$ that is repeated at least $\P$ times, we can obtain $\tau'$ by removing $\frac{\P}{|c|}$ copies of $c$, as the counter values along $\tau'$ are at least as high as the corresponding ones in $\tau$.
Observe that $\tau'$ is of length $\ell-\P$ from $(s,v)$ to $(e,u')$ with $u'=u-\effect{c}\frac{\P}{|c|}$. Since we have $u'\ge u\ge \Prev{\cT}$, then $u\equiv u'$.

Otherwise, each non-positive cycle in $\tau$ is taken at most $\P$ times. Thus, the positive cycles are repeated at least $3b^4\P-b\P  \ge 3b^3\P$ times.
In particular, there exists a positive cycle $c$ that repeats at least $3b^2\P$ times. By removing $\frac{\P}{|c|}$ occurrences of it, we obtain a path $\tau'$ of length $\ell-\P$. Notice first that this path is valid. Indeed, up until the cycle $c$ is taken, the path $\tau'$ coincides with $\tau$, so the counter remains above $\Prev{\cT}$. 
Since $c$ is a positive cycle, after completing its iterations, the counter value becomes at least $3b^2\P-\P+\Prev{\cT}$. Then, even if all remaining transitions in the negative cycles have effect $-1$, the counter value is reduced by at most $b^2\P$ (as there are at most $(b-1)\P$ remaining cycles, each of effect at least $-b$, and the simple paths in $\tau$ can reduce by another $b$ at most).
Thus, the value of the counter remains at least $3b^2\P-\P+\Prev{\cT} - b^2\P > \Prev{\cT}$.
Finally, let $(e,u')$ be the configuration reached at the end of $\tau'$, then $u-u'=\effect{c} \frac{\P}{|c|}$, so $u\equiv u'$.

\paragraph*{Case 1a.2: $\tau$ reaches counter value $\Prev{\cT}$.}

Let $0\le z_f\le z_u\le \ell$ be the first and ultimate positions in $\tau$ where the counter value is exactly $\Prev{\cT}$. 
We split $\tau$ into three parts: $\tau_1 = \tau[0~..~z_f), \tau_2=\tau[z_f~..~z_u), \tau_3=\tau[z_u~..~\ell]$ (it could be that $z_f=z_u$, in which case the middle part is empty).
Since $\tau$ is of length $\ell\ge \sT\geq b^9\cdot\P$, then at least one of the parts above is of length at least $b^8\cdot\P$ (recall $b\ge 3$). We split according to which part that is. For simplicity, we start with the cases that $\tau_2$ or $\tau_3$ are long, and only then handle the case of a long $\tau_1$.
\begin{enumerate}
	\item {\it The middle part  $\tau_2=\tau[z_f~..~z_u]$ is of length at least $b^8\cP$.} \label{item:LongMiddlePart}
	
	As $\tau_2$ is of length at least $b^8\cdot\P$, some cycle $c$ in it must repeat at least $b^6\cdot\P$ times.
	
	If $c$ is balanced, we can obtain $\tau'$ by removing $\frac{\P}{|c|}$ of its repetitions.
	
	If $c$ is positive, starting at position $x$ with counter value $v_x$, then the counter value at position $y$ where $c$'s repetitions end is at least $v_x+b^6\cdot\P$. As $\tau_2$ eventually gets down to $\Prev{\cT}<\P$, there must be a negative cycle $c_{-}$ that repeats at least $b\cdot\P$ times between position $y$ and the first position after $y$ that has the counter value $v_x+\frac{b^6\cdot\P}{2}$. Hence, we can obtain $\tau'$ by removing repetitions of $c$ and $c_{-}$, as per \cref{cl:CycleCombinationForLength}, ensuring that the only affected values are above $v_x$.
	
	If $c$ is negative, starting at position $x$ with counter value $v_x$, then $v_x\geq b^6\cdot\P$.
	As $\tau_2$ starts with counter value $\Prev{\cT}$, and $\Prev{\cT}<P$ (\cref{cl:PeriodBiggerThanPreviousThreshold}), there must be a positive cycle $c_{+}$
	that repeats at least $b\cdot\P$ times between the last position with counter value $\frac{b^6\cdot\P}{2}$ and $x$. Hence, we can obtain $\tau'$ by removing repetitions of $c$ and $c_{+}$, as per \cref{cl:CycleCombinationForLength}, ensuring that the only affected values are above $b^5\cdot\P$.
	
	\item {\it The last part $\tau_3=\tau[z_u~..~\ell]$ is of length at least $b^8\cP$.}\label{item:LongLastPart}
	
	As in the previous case, a cycle $c$ must repeat in $\tau_3$ at least $b^6\cdot\P$ times.
	If $c$ is balanced, we can remove $\frac{\P}{|c|}$ of its repetitions, getting the desired path $\tau'$.
	
	Otherwise, it must be that $\tau_3$ stays above $\Prev{\cT}$, and reaches a value at least $b^6\cdot\P$. Indeed, if $c$ is positive then its repetitions end at some position $x$ with a counter value at least that high, and if it is negative it starts at some position $x$ with a counter value at least that high.
	
	If the counter value also drops to $\frac{b^6\cdot\P}{2}$ after position $x$, then we can remove positive and negative cycles exactly as in the previous case. Otherwise, we can just remove $\frac{\P}{|c|}$ repetitions of $c$, guaranteed that the counter value at the end of $\tau_3$ is above $\Prev{\cT}$.
	
	\item {\it Only the first part $\tau_1=\tau[0~..~z_f)$ is of length at least $b^8\cP$.}\label{item:LongFirstPart}
	
	If any of the other parts is long, we shorten them.
	Otherwise, their combined length is less than $2b^8\cdot\P<\frac{\sT}{2}$, implying that the first part $\tau_1$ is longer than $\segStart_i(v)+\frac{\sT}{2}$.
	
	Hence, by \cref{cl:SegmentsCycles}, there are `fast' and `slow' cycles $c_f$ and $c_s$, respectively, of slopes $\es_f<\es_s$, such that each of them repeats at least $b^4\cdot\P$ times in $\tau_1$.
	
	Thus, by \cref{cl:CycleCombinationForLength}, we can add and/or remove some repetitions of $c_f$ and $c_s$, such that $\tau_1$ is shorten by exactly $\P$.
	Yet, we should ensure that the resulting path $\tau'$ is valid, in the sense that its corresponding first part $\tau'_1$ cannot get the counter value to $0$.
	We show it by cases:
	\begin{itemize}
		\item If $c_f$ or $c_s$ are balanced cycles, then we can remove the balanced cycle only, without changing the remaining counter values.
		\item If there is a positive cycle $c_{+}$ that repeats at least $2b^2\cdot\P$ times, then the counter value climbs by at least $2b^2\cdot\P$ from its value $v_x$ at position $x$ where $c_{+}$ starts and the position $y$ where its repetitions end. 
		As the counter gets down to $\Prev{\cT}$ at the end of $\tau_1$, and $\Prev{\cT}<P$ (\cref{cl:PeriodBiggerThanPreviousThreshold}), there must be a negative cycle $c_{-}$ that repeats at least $b\cdot\P$ times between position $y$ and the first position after $y$ that has the counter value $v_x+b^2\cdot\P$. Hence, we can remove repetitions of $c_{+}$ and $c_{-}$, as per \cref{cl:CycleCombinationForLength}, ensuring that the only affected values are above $v_x$.
		\item Otherwise, both $c_f$ and $c_s$ are negative, implying that we add some repetitions of $c_f$ and remove some repetitions of $c_s$. We further split into two subcases:
		\begin{itemize}
			\item If $c_s$ appears before $c_f$ then there is no problem, as the only change of values will be their increase, and all the values were nonzero to begin with (as we are before $z_f$). 
			\item If $c_f$ appears first, ending at some position $x$, while $c_s$ starts at some later position $y$, then a-priori it might be that
			repeating $c_f$ up to $b\cdot\P$ more times, as per \cref{cl:CycleCombinationForLength}, will take the counter value to $0$. 
			
			However, observe that since there are at most $b-2$ positive cycles, and each of them can repeat at most $2b^2\cdot\P-1$ times, the counter value $v_x$ at position $x$, and along the way until position $y$, is at least $v_y-(b-2)2b^2\cdot\P$, where $v_y$ is the counter value at position $y$. As $c_s$ repeats at least $b^4\cdot\P$ times, we have $v_y\geq b^4\cdot\P$. Thus $v_x \geq b^4\cdot\P - 2(b-2)b^2\cdot\P > b^2\P$. Hence, repeating $c_f$ up to $b\cdot\P$ more times at position $x$ cannot take the counter value to $0$, until position $y$, as required.
		\end{itemize}
		
	\end{itemize}
	
\end{enumerate}



\subsubsection*{Proof of \cref{cl:Periodicity}.\cref{item:CoreShiftShortening}}
\Subject{The case of Segment $\segStart_{0}$}
For a path of length $\ell$, we have in Segment $0$ that $\shift(\ell)=\ell$, and indeed a path from $v$ is valid from $v+\P$ and vise versa, as they do not hit $\Prev{\cT}$: Their maximal drop is $\sT$, while $v\geq\cT>\P+\sT>\Prev{\cT} + \sT$.

We turn to the $i$th segment, for $i\geq 1$.
Consider a basic path $\tau$ for $(s,v+\P)\leads{\shift(\ell)}(e,u)$. Recall that  $\shift(\ell)=\ell+\frac{\P}{-\es_i} \in [\segStart_i(v+\P) ~..~ \segStart_i(v+\P)+\sT)]$. 
We construct from $\tau$ a path $\tau'$ for  $(s,v)\leads{\ell}(e,u')$, such that $u\equiv u'$, along the following cases.

\paragraph*{Case 2a.1: The counter values in $\tau$ stay above $\Prev{\cT}$}

As there is no position in $\tau$ with counter value $\Prev{\cT}$, then in particular $\tau$ has no zero-transitions.
We further split into two subcases:

\begin{enumerate}
	\item If $\tau$ does not have $b\cdot\P$ repetitions of a `relatively fast' cycle with slope $\es_j$ for $j\leq i$, then the
	drop of $\tau$, and of every prefix of it, is at most $X+Y$, where $X$ stands for the drop outside `slow' cycles of slope $\es_h$ for $h>i$, and $Y$ for the rest of the drop. We have $X<b^3\cdot\P$ and $Y<(\ell+\frac{\P}{\es_i})(-\es_{i+1})$.
	
	We claim that we can obtain $\tau'$ by removing $\frac{\P}{-\es_i \cdot |c|}$ repetitions of any cycle $c$, which repeats enough in $\tau$, having that the drop $D$ of $\tau'$ is less than $v-\Prev{\cT}$.
	
	Indeed, the maximal such drop $D$ might be the result of removing only cycles of slope $(+1)$, whose total effect is $\frac{\P}{-\es_i}$, having 
	$
	D\leq \frac{\P}{-\es_i} + X + Y =\frac{\P}{-\es_i} + b^3\cdot\P + (\ell+\frac{\P}{-\es_i})(-\es_{i+1}) \leq 
	b \cdot\P + b^3\cdot\P +(\ell+\frac{\P}{-\es_i})(-\es_{i+1}).
	$	
	Since $\ell<\segStart_{i+1}(v+\P)= \frac{1}{-\es_{i+1}}(v +\P - \Prev{\cT})- b^8\cdot\P$, we have
		$D \leq 
		(b^3+b)\cdot\P + (\frac{1}{-\es_{i+1}}(v +\P - \Prev{\cT})- b^8\cdot\P)+\frac{\P}{-\es_i})(-\es_{i+1}) 
		=$\\
		$
		(b^3+b)\cdot\P + v +\P - \Prev{\cT}- (-\es_{i+1})\cdot b^8\cdot\P+\frac{(-\es_{i+1}) }{-\es_i}\cdot\P)
		=
		(b^3+b+1+\frac{(-\es_{i+1})}{-\es_i})\cdot\P + v - \Prev{\cT}- (-\es_{i+1})\cdot b^8\cdot\P)
		<
		b^4 \cdot\P - b^7\cdot\P+ v - \Prev{\cT}.$
	It is thus left to show that $b^4 \cdot\P < b^7\cdot\P$, which obviously holds.
	
	\item Otherwise, namely when $\tau$ does have $b\cdot\P$ repetitions of a `relatively fast' cycle with slope $\es_j$ for $j\leq i$, let $c$ be the first such cycle in $\tau$. We can obtain $\tau'$ by removing $\frac{\P}{-\es_i \cdot |c|}$ repetitions of $c$: The counter value in $\tau'$, which starts with counter value $v$, at the position after the repetitions of $c$ will be at least as high as the counter value in $\tau$, which starts with counter value $v+\P$, after the repetitions of $c$. Notice that the counter value cannot hit $\Prev{\cT}$ before arriving to the repetitions of $c$ by the argument of the previous subcase.
	
\end{enumerate}

\paragraph*{Case 2a.2: $\tau$ reaches counter value $\Prev{\cT}$}

Again let $\tau_1 = \tau[0~..~z_f), \tau_2=\tau[z_f~..~z_u), \tau_3=\tau[z_u~..~\shift(\ell)]$ as in \cref{item:SegmentPeriodicityShortening}.

In order to handle possible \ZTs, we shorten $\tau_1$, such that the resulting first part $\tau'_1$ of $\tau'$, which starts with counter value $v$, also ends with counter value exactly $\Prev{\cT}$. 
Since $\tau_1$ reaches $\Prev{\cT}$ and is shorter than $\segStart_{i+1}(v+\P)$, it has by \cref{cl:SegmentsCycles}.\cref{Item:FastCycles} at least $b^4\cdot\P$ repetitions of a `fast' cycle of slope $\es_f\leq \es_i$. Let $c_f$ be the first such cycle. We split to cases.

\begin{enumerate}
	\item
	If $\es_f=\es_i$ or $\tau_2$ or $\tau_3$ are of length at least $b^5\cdot\P$, we can remove $\frac{\P}{-\es_f \cdot |c_f|}$ repetitions of $c_f$ in $\tau_1$. Note that the resulting first part $\tau'_1$ of $\tau'$ indeed ends with counter value $\Prev{\cT}$. However, while when  $\es_f=\es_i$  the resulting length of $\tau'$ will be $\ell$, as required, in the case that $\es_f<\es_i$, we have that $\tau'$ will be longer than $\ell$. Nevertheless, in this case, as $\tau_2$ or $\tau_3$ are of length at least $b^5\cdot\P$, we can further shorten $\tau_2$ or $\tau_3$ without changing their effect, as per \cref{cl:CycleCombinationForLength}, analogously to \cref{item:LongMiddlePart} or \cref{item:LongLastPart}, respectively, in the proof of \cref{cl:Periodicity}.\cref{item:SegmentPeriodicityShortening}.2.
	
	\item \label{Item:FastSlowInPartOne2a}
	Otherwise, we are in the case that $\tau_1$ has a `really fast' cycle of slope $\es_f< \es_i$ that repeats at least $b^4\cdot\P$ times, and both $\tau_2$ or $\tau_3$ are of length less than $b^5\cdot\P$.
	We claim that in this case $\tau_1$ must also have $b^4\cdot\P$ repetitions of a `relatively slow' cycle $c_s$ of slope $\es_s\geq \es_i$. 
	
	Indeed, assume toward contradiction that $\tau_1$ has less than $b^4\cdot\P$ repetitions of a cycle with slope $\es_s$ for $s\geq i$. Then the longest such path has less than $b$ transitions of $(+1)$ out of cycles, $b^6\cdot\P$ such transitions in cycles, and the rest of it consists of `fast' cycles with slope indexed lower than $i$. 
	
	Thus its length is at most $X + L$, where $X=b^6\cdot\P$ is the $(+1)$ transitions, and $L$ is the longest length to drop from counter value $v+\P+X$ to $\Prev{\cT}$ with `fast' cycles. Thus, $L\leq \frac{1}{-\es_{i-1}}(v+\P+X - \Prev{\cT})$.
	
	Now, we have that the length of $\tau$ is at least $\segStart_{i}(v+\P)=\frac{1}{-\es_{i}}(v - \Prev{\cT})- b^8\cdot\P$.
	Thus, parts $\tau_2$ and $\tau_3$ of $\tau$ are of length at least $Z = \frac{1}{-\es_{i}}(v - \Prev{\cT})- b^8\cdot\P - \frac{1}{-\es_{i-1}}(v+\P+X - \Prev{\cT}) - X = (\frac{1}{-\es_i} - \frac{1}{-\es_{i+1}})(v - \Prev{\cT}) - b^8\cdot\P - (1+ \frac{1}{-\es_{i+1}})b^6\cdot\P$. 
	
	Since $(\frac{1}{-\es_i} - \frac{1}{-\es_{i+1}})\geq \frac{1}{b^2}$, $(1+ \frac{1}{-\es_{i+1}}\leq b+1)$, and $v - \Prev{\cT} \geq \cT - \Prev{\cT} > \cT -\P > 3b^{10} \P$, we have $Z\geq \frac{1}{b^2}(3b^{10} \cdot\P) - b^8\cdot\P - (b+1)b^6\cdot\P = 2 b^8\cdot\P - (b+1)b^6\cdot\P$.
	Therefore, at least one of $\tau_2$ and $\tau_3$ is of length at least $b^7 \cdot \P$, leading to a contradiction.
	
	So, we are in the case that $\tau_1$ has at least $b^4\cdot\P$ repetitions of a `really fast' cycle $c_f$ of slope $\es_f< \es_i$ as well as $b^4\cdot\P$ repetitions of a `relatively slow' cycle $c_s$ of slope $\es_s\geq \es_i$. 
	
	By analyzing the different possible orders of $c_s$ and $c_f$, we can cut and repeat the cycles far enough from 0 so as to construct valid paths. See the Appendix for details.\qed
\end{enumerate}

\subsection{Proof of \cref{cl:PeriodicityUA}}
The proof is by induction over the structure of $\phi$, where 
\cref{cl:SL13_Theorem1} already provides the periodicity for all CTL operators.

It remains to plug $UA$ into the induction by showing (1)
the $(\cT, \P)$-periodicity of a formula $\phi=\psi_1 UA \psi_2$ with respect to an OCA $\cA$, provided that its subformulas are $(\Prev{\cT},\Prev{\P})$-periodic; and (2) by showing that the period $\P$ and threshold $\cT$ are single-exponential in $n=|\cA|$ and in the nesting depth of $\phi$.

\begin{enumerate}
	\item We show that for every state $s\in S$ and counters $v,v'>\cT$, if $v\equiv v'\mod \P$ then $(s,v)\models \phi\iff (s,v')\models \phi$. Withot loss of generality, write $v' = v + z\cdot\P$, for some $z\in\Nat$.
	
	\labelitemi\ If $(s,v)\models \phi$ then by the definition of the $AU$ operator and the completeness of $\cA$ we have i) there is a level $\ell$, such that for every state $e$ and counter value $u$, if $(s,v)\leads{\ell}(e,u)$ then $(e,u)\models\psi_2$, and ii) for every level $m<\ell$, state $h$ and counter value $x$, if $(s,v)\leads{m}(h,x)$ then $(h,x)\models\psi_1$.
	
	Observe first that if $\ell\not\in\core(v)$, then there also exists a level $\hat{\ell}<\ell$ witnessing $(s,v)\models \phi$, such that $\hat{\ell}\in\core(v)$. Indeed, we obtain $\hat{\ell}$, by choosing the largest level $\hat{\ell}$ in the $\core$ of $\ell$'s segment, such that $\ell\equiv\hat{\ell} \mod \P$.		
	As $\hat{\ell}<\ell$, it directly follows that for every level $\hat{m}<\hat{\ell}$, state $\hat{h}$ and counter value $\hat{x}$, if $(s,v)\leads{\hat{m}}(\hat{h},\hat{x})$ then $(\hat{h},\hat{x})\models\psi_1$.
	Now, assume toward contradiction that there is a state $\hat{e}$ and a counter value $\hat{u}$, such that  $(s,v)\leads{\hat{\ell}}(\hat{e},\hat{u})$ and $(\hat{e},\hat{u})\not\models\psi_2$. Then by (possibly several applications of) \cref{cl:Periodicity}.\cref{item:SegmentPeriodicityLengthening}, there is also a counter value $\hat{\hat{u}}\equiv_{\Prev{\cT},\Prev{\P}} \hat{u}$, such that $(s,v)\leads{\ell}(\hat{e},\hat{\hat{u}})$. As $\psi_2$ is $(\Prev{\cT},\Prev{\P})$-periodic, we have $(\hat{e},\hat{\hat{u}})\not\models\psi_2$, leading to a contradiction.
	
	Next, we claim that the level ${\ell'}=\shift^z(\hat{\ell})$, obtained by $z$ (recall that $z=\frac{v'-v}{\P}$)  applications of the $\shift$ function on $\hat{\ell}$, witnesses $(s,v')\models \phi$, namely that i) for every state $e'$ and counter value $u'$, if $(s,v')\leads{\ell'}(e',u')$ then $(e',u')\models\psi_2$, and ii) for every level $m'<\ell'$, state $h'$ and counter value $x'$, if $(s,v')\leads{m'}(h',x')$ then $(h',x')\models\psi_1$.
	
	Indeed, i) were it the case that $(e',u')\not\models\psi_2$ then by ($z$ applications of) \cref{cl:Periodicity}.\cref{item:CoreShiftShortening}, there was also a counter value $u''\equiv_{\Prev{\cT},\Prev{\P}} u'$, such that $(s,v)\leads{\hat{\ell}}(e',u'')$ and therefore $(e',u'')\not\models\psi_2$, leading to a contradiction; and ii) were it the case that $(s,v')\leads{m'}(h',x')$ and $(h',x')\not\models\psi_1$ then a) by \cref{cl:Periodicity}.\cref{item:SegmentPeriodicityShortening}, as in the argument above, there is also a level $\tilde{m}\leq m'$, such that $\tilde{m}$ is in the core of $m'$'s segment and $(s,v')\leads{\tilde{m}}(h',\tilde{x})$ where $\tilde{x} \equiv_{\Prev{\cT},\Prev{\P}} x$, and b) by \cref{cl:Periodicity}.\cref{item:CoreShiftShortening} there is a level $\hat{m}<\hat{\ell}$, such that $(s,v)\leads{\hat{m}}(h',\hat{x})$ where $\hat{x} \equiv_{\Prev{\cT},\Prev{\P}} x$
	and therefore $(h',\hat{x})\not\models\psi_1$, leading to a contradiction.
	
	\labelitemi\ If $(s,v')\models \phi$ then we have $(s,v)\models \phi$ by an argument analogous to the above, while using \cref{cl:Periodicity}.\cref{item:CoreShiftLengthening} instead of \cref{cl:Periodicity}.\cref{item:CoreShiftShortening}.
	
	\item The threshold $\cT$ and period $\P$ are calculated along the induction on the structure of the formula $\phi$. They start with threshold $0$ and period $1$, and their increase in each step of the induction depends on the outermost operator. 
	
	Observe first that we can take as the worst case the same increase in every step, that of the UA case, since it guarantees the others. Namely, its required threshold, based on the threshold and period of the subformulas, is bigger than the threshold required in the other cases, and its required period is divisible by the periods required for the other cases.
	
	Next, notice that both the threshold and period in the UA case only depend on the periods of the subformulas (i.e., not on their thresholds), so it is enough to show that the period is singly exponential in $n$ and the nesting depth of $\phi$.
	
	The period in the UA case is defined to be $\P(\phi)=B\cdot\lcm(\P(\psi_1),\P(\psi_2))$, where 
	$B=\lcm[1..2b^3]$, and $b$ is the bound on the length of a linear path scheme for $\cA$.
	By \cite{LS20}, $b$ is polynomial in $n$, and as $\lcm[1..2b^3]<4^{2b^3}$, we get that $B$ is singly exponential in $n$.
	
	Considering $\lcm(\P(\psi_1),\P(\psi_2)$, while in general $\lcm(x,y)$ of two numbers $x$ and $y$ might be equal to their multiplication, in our case, as both $\psi_1$ and $\psi_2$ are calculated along the induction via the same scheme above, they are both an exponent of $B$. Hence, $\lcm(\P(\psi_1),\P(\psi_2)=\max(\P(\psi_1),\P(\psi_2)$. Thus, we get that $\P(\phi)\leq B^x$, where $x$ is bounded by the nesting depth of $\phi$.	
\qed	
\end{enumerate}


\bibliography{bib}
\newpage
\appendix

\section{Appendix -- Omitted Proofs}
\subsection{Proofs for \cref{sec:periodicity_in_CTL_Sync,sec:Presburger}}

\FullVersion{
\begin{proof}[Proof of \cref{cl:TotalPeriodicity}]
	By definition, if $\phi$ is totally $(\thresh(\phi),\period(\phi))$-periodic then so is every subformula. We show the converse.
	
	Denote by $\mathsf{subs}$ the set of subformulas of $\phi$. Assume that every  $\psi\in \mathsf{subs}$ is $(\thresh(\psi),\period(\psi))$-periodic for some constants $\thresh'(\psi),\period'(\psi)$. Let $\thresh(\phi)=\max\{\thresh'(\psi)\mid \psi\in \mathsf{subs}\}$ and $\period(\phi')=\lcm(\{\period(\psi)\mid \psi\in \mathsf{subs}\})$. We claim that $\phi$ is totally $(\thresh(\phi),\period(\phi))$-periodic. 
	Indeed, this follows from the simple observation that if $\psi$ is $(\thresh'(\psi),\period'(\psi))$-periodic, then it is also $(\thresh''(\psi),\period''(\psi))$-periodic for every $\thresh''(\psi)>\thresh'(\psi)$ and $\period''(\psi)$ that is a multiple of $\period'(\psi)$.
\end{proof}

\begin{proof}[Proof of \cref{cl:OCAasKripke}]
	Formally, let $\cA=\tuple{S,\Delta,L}$. We define $\cK=\tuple{S\times [0..\thresh(\phi)+\period(\phi)-1],R,L'}$ where the transitions in $R$ are induced by the configuration reachability relation of $\cA$ as defined in~\cref{sec:Preliminaries} where we identify $(s,v)$ for $v\ge \thresh(\phi)+\period(\phi)$ with $\thresh(\phi)+(v\mod \period(\phi))$.
	
	We claim that  $\cA\models \phi$ if and only if $\cK\models \phi$ if and only if $\cK\models \phi$. Indeed, consider an infinite computation $\pi$ of $\cA$ and its corresponding computation $\pi'$ in $\cK$ obtained by taking modulo $\period(\phi)$ as defined above, then the set of subformulas of $\phi$ that are satisfied in each step of $\pi$ and of $\pi'$ is identical, by total $(\thresh(\phi),\period(\phi))$-periodicity. In particular, $\pi\models\phi$ if and only if $\pi'\models \phi$.
	
	Since every computation of $\cA$ induces a computation of $\cK$ (by taking modulo $\period(\phi)$ where relevant), and every computation of $\cK$ induces a computation of $\cA$ (by following the counter updates, relying on  total $(\thresh(\phi),\period(\phi))$-periodicity to make sure $0$-transitions in $\cK$ are consistent with ones in $\cA$), the equivalence follows.
\end{proof}

} 

\begin{proof}[Proof of \cref{cl:OCAbyTVASS}]
	In~\cite{LS20}, the authors consider a model called 2-TVASS, comprising a 2-dimensional vector addition system where the first counter can also be tested for zero. In our setting, this corresponds to an OCA equipped with an additional counter that cannot be tested for zero. 
	A configuration of a 2-TVASS is $(s,x_1,x_2)$ describing the state and the values of the two counters. 
	
	We transform the OCA $\cA$ to a 2-TVASS $\cA'$ by introducing a length-counting component. That is, in every transition of $\cA'$ as a 2-TVASS, the second component increments by 1. We thus have the following: there is a path of length $\ell$ from $(s,v)$ to $(s',v')$ in $\cA$ if and only if there is a path from $(s,v,0)$ to $(s',v',\ell)$ in $\cA'$.
	
	From~\cite[Corollary 16]{LS20}, using the fact that $\cA'$ has only weights in $\{-1,0,1\}$, we have that if there is a path from $(s,x_1,x_2)$ to $(s',y_1,y_2)$ in $\cA'$, then there is also such a $\pi$-shaped path where $\pi$ is of flat length $\text{poly}(|S|)$ and size $O(|S|^3)$, which concludes the proof. 
\end{proof}

\begin{proof}[Proof of \cref{cl:periodic_to_Presburger}]
	Consider the set $V_{\text{init}}=\{v \mid v\le \thresh(\phi) \text{ and } (s,v)\models \phi  \}$. We define
	\[
	P_{\phi,s}(x):= \bigvee_{u\in V_{\text{init}}}\exists m.\ x=u+m\cdot\period(\phi)  
	\]
	The correctness of this formula is immediate from the definition of $V_{\text{init}},\thresh(\phi)$, and $\period(\phi)$.
	In order to compute $P_{\phi,s}(x)$, we need to compute $V_{\text{init}}$. This is done 
	using \cref{cl:model checking periodic formulas}, by evaluating
	$\phi$ from every state of the Kripke structure.
\end{proof}

\begin{proof}[Proof of \cref{cl:Presburger_to_periodic}]
	Recall from \cref{obs:PA dim 1} that the set of satisfying assignments for a PA formula in dimension 1 is effectively semilinear and can be written as $\lin(B,\{r\})\cup C=\{b+\lambda r\mid b\in B,\ \lambda\in \bbN\}\cup C$ for effectively computable $B,C\subseteq \bbN$ and $r\in \bbN$. 
	
	For every state $s\in S$, consider therefore the set $\lin(B_s,\{r_s\})\cup C_s$ corresponding to $P_{\phi,s}$. Define $\thresh(\phi)=\max(\{b\mid b\in B_s,\ s\in S\}\cup C_s)$ and $\period(\phi)= \lcm(\{r_s\}_{s\in S})\cdot \thresh(\phi)$. 
	
	We claim that $\phi$ is $(\thresh(\phi),\period(\phi))$-periodic. Indeed, consider $v,v'>\thresh(\phi)$ such that $v\equiv v' \mod \period(\phi)$, and let $s\in S$.  Note that $v,v'\notin C_s$ for any $s$.	
	Without loss of generality assume $v>v'$, so we can write $v-v'=K\cdot \period(\phi)$ for some $K\in \bbN$.
	
	Now, if $(s,v')\models \phi$, then $v'=b'_s+\lambda'r_s$ for some $b'_s\in B_s$ and $\lambda'\in \bbN$. Then, $v=b'_s+\lambda' r_s+K\cdot \period(\phi)=b'_s+(\lambda'+K\frac{\period(\phi)}{r_s})r_s$, and note that $r_s$ divides $\period(\phi)$, so we have $v\in \lin(B_s,\{r_s\})$ and therefore $(s,v)\models \phi$.
	
	Conversely, if $(s,v)\models \phi$ then $v=b_s+\lambda r_s$ for some $b_s\in B_s$ and $\lambda\in \bbN$. Then, $v'=b_s+\lambda r_s-K\cdot \period(\phi)=b_s+(\lambda-K\frac{\period(\phi)}{r_s})r_s$. Since $v'>\thresh(\phi)\ge b_s$, then $\lambda-K\frac{\period(\phi)}{r_S}\ge 1$. Thus, $v'\in \lin(B_s,\{r_s\})$ so $(v',s)\models \phi$.
	Hence, $\phi$ is $(\thresh(\phi),\period(\phi))$-periodic.
\end{proof}

\FullVersion{

\begin{proof}[Proof of \cref{cl:CycleCombinationForElr}]\
	
	If $\frac{e_1}{\ell_1} = \frac{e_2}{\ell_2}$ or $\frac{e_1}{\ell_3} = \frac{e_2}{\ell_2}$, we can just have $k_1=1$ and $k_3=0$ or $k_3=1$ and $k_1=0$, respectively.
	
	Otherwise, we can have $k_1=(\ell_2\cdot e_3 - \ell_3\cdot e_2)$ and $k_3=(\ell_1\cdot e_2 - \ell_2\cdot e_1)$. 	Indeed, notice that $k_1$ and $k_3$ are positive and the overall ratio of effect divided by length is:
	
	\begin{align*} 
		 \frac{(\ell_2\cdot e_3 - \ell_3\cdot e_2) e_1 + (\ell_1\cdot e_2 - \ell_2\cdot e_1) e_3}
		{(\ell_2\cdot e_3 - \ell_3\cdot e_2) \ell_1 + (\ell_1\cdot e_2 - \ell_2\cdot e_1) \ell_3}   
		&= \frac{e_1 \cdot\ell_2\cdot e_3 - e_1 \cdot \ell_3\cdot e_2 + e_3 \cdot \ell_1\cdot e_2 -  e_3 \cdot  \ell_2\cdot e_1}
		{\ell_1 \cdot \ell_2\cdot e_3 - \ell_1 \cdot \ell_3\cdot e_2  + \ell_3 \cdot \ell_1\cdot e_2 - \ell_3 \cdot \ell_2\cdot e_1}  \\[5pt] 
		 =\frac{ e_3 \cdot \ell_1\cdot e_2 - e_1 \cdot \ell_3\cdot e_2 }
		{\ell_1 \cdot \ell_2\cdot e_3  - \ell_3 \cdot \ell_2\cdot e_1}  &=~
		\frac{ e_2 (e_3 \cdot \ell_1 - e_1 \cdot \ell_3)}
		{\ell_2(\ell_1\cdot e_3  - \ell_3 \cdot e_1)} ~=~ 
		\frac{e_2}{\ell_2}
	\end{align*}
\end{proof}

\begin{proof}[Proof of \cref{cl:CycleCombinationForLength}]\	
	We provide the proof for lengthening the path. The proof for shortening it is analogous.
	We split to cases.
	\begin{enumerate}
		\item $e_1=0$ or $e_2=0$: We add $k_1=\frac{x}{\ell_1}$ repetitions of $c_1$ or $k_2=\frac{x}{\ell_2}$ repetitions of $c_2$, respectively.
		\item $e_1<0$ and $e_2>0$: We add $k_1=e_2\frac{x}{e_2\cdot\ell_1 - e_1\cdot\ell_2}$ repetitions of $c_1$ and $k_2=-e_1\frac{x}{e_2\cdot\ell_1 - e_1\cdot\ell_2}$ repetitions of $c_2$.
		\item $e_1>0$ and $e_2>0$: We add $k_1=e_2\frac{x}{e_2\cdot\ell_1 - e_1\cdot\ell_2}$ repetitions of $c_1$ and remove $k_2=e_1\frac{x}{e_2\cdot\ell_1 - e_1\cdot\ell_2}$ repetitions of $c_2$.
		\item $e_1<0$ and $e_2<0$: We remove $k_1=e_2\frac{x}{e_1\cdot\ell_2 - e_2\cdot\ell_1}$ repetitions of $c_1$ and add $k_2=e_1\frac{x}{e_1\cdot\ell_2 - e_2\cdot\ell_1}$ repetitions of $c_2$.
	\end{enumerate}
\end{proof}

\begin{proof}[Proof of \cref{cl:PeriodBiggerThanPreviousThreshold}]
	When $\phi$ is an atomic proposition, there is no $\Prev{\cT}(\phi)$.
	
	Consider a formula $\phi=\psi_1 UA \psi_2$.
	We have 
	$\Prev{\cT}(\phi)=\max(\cT(\psi_1),\cT(\psi_2)) =\max(b^{11}\cdot \P(\psi_1),b^{11}\cdot \P(\psi_2)) = b^{11}\cdot \max( \P(\psi_1), \P(\psi_2))$ and
	$\P(\phi)=B\cdot\Prev{\P} = B\cdot \lcm(\P(\psi_1),\P(\psi_2)) > B\cdot\max( \P(\psi_1), \P(\psi_2))$.
	It remains to show that $B=\lcm[1..2b^3]>b^{11}$ for $b\ge 2$. To this end, it is proved in~\cite{nair1982chebyshev} that $\lcm[1..n]\ge  2^n$ for all $n\ge 7$.
	
	Then, for $b\ge 2$ we have $2b^3\ge 16>7$, so $B\ge 2^{2b^3}$.  We now prove by induction that $2^{2b^3}\ge b^{11}$ for all $b\ge 2$. For $b=1$ we have $2^{2\cdot 2^3}=2^{16}>2^{11}$. Assume correctness for $b$, we prove for $b+1$: We have
	$
	2^{2(b+1)^3}=2^{2(b^3+3b^2+3b+1)}=2^{2b^3}\cdot 2^{2(3b^2+3b+1)}\ge 
	2^{2b^3}\cdot 2^{2\cdot 3\cdot 4}=2^{24}\cdot 2^{2b^3}
	$,
	whereas $(b+1)^{11}\le (2b)^11=2^{11}\cdot  b^{11}<2^{24}\cdot 2^{2b^3}$,
	where the last inequality follows from the induction hypothesis.
\end{proof}

} 

\begin{proof}[Proof of \cref{cl:SegmentsCycles}]
	For $i=0$, the first claim trivially holds, as it claims for  a cycle with a slope at least $\es_1$, which holds for any cycle, and since $\frac{\sT}{2}>2b^5\cdot\P$, it must repeat at least $b^4\cdot \P$ times.
	The second claim vacuously holds, as no path can reach $\Prev{\cT}$ at Segment $0$.
	
	Consider $i\in[1..m]$. We prove each of the claims separately.
	
	We show that there are at least $b^5\cdot\P$ total cycle repetitions of the required slope, and since there are at most $b$ different cycles, one of them should repeat at least $b^4\cdot\P$ times.

	\begin{enumerate}
		\item
		Consider a basic path $\tau$ of length $\ell\ge \segStart_{i}(v)+\frac{\sT}{2}$, such that the counter values of $\tau$ stay above $\Prev{\cT}$.
		Assume by way of contradiction that $\tau$ has less than $b^5\cdot \P$ cycles with $\es_j$ for $j> i$. We will show that $\ell$ must fall short of $\segStart_{i}(v)+\frac{\sT}{2}$.
		
		Since there are at most $b^5\cdot \P-1$ cycles with $\es_j$ with $j> i$, each of length at most $b$, then the total length of $\tau$ spent in such cycles and in the simple paths of $\tau$ (of total length at most $b$) is $X\le b(b^5\cdot \P-1)+b=b^6\cdot \P$.
		The effect accumulated by the transitions in $X$ is at most $X$, if all the relevant transitions are $+1$.
		
		The remaining part of the path, whose length is denoted by $Y=\ell-X$, is spent in cycles with $\es_j$ for $j\le i$. Note that its effect is at most $(v+X-\Prev{\cT})$, and therefore its length $Y \leq \frac{-1}{\es_{i}}(v+X-\Prev{\cT})$.
		
		Observe that any actual path $\tau$ that satisfies the required constraints cannot be shorter than the above theoretical path in which there are $X$ transitions of effect $(+1)$.
		
		Therefore, we have $\ell\leq X + Y \leq b^6\cdot \P + \frac{-1}{\es_{i}}(v+b^6\cdot \P-\Prev{\cT}) = 
		\frac{-1}{\es_{i}}(v-\Prev{\cT}) + (\frac{-1}{\es_{i}}+1) b^3\cdot \P \leq  \frac{-1}{\es_{i}}(v-\Prev{\cT}) + (b^7+b^6)\cdot\P$.
		
		Recall, however, that $\ell\ge \segStart_{i}(v)+\frac{\sT}{2}=
		(\frac{-1}{\es_{i}}(v-\Prev{\cT})-b^8\cdot \P) + (\frac{1}{2}b^9\cdot\P)$, leading to a contradiction, as $\frac{b^9}{2}>b^6+b^7+b^8$, for $b>2$.

		\item
		Consider a basic path $\tau$ of length $\ell< \segStart_{i+1}(v)$, such that the counter values of $\tau$ reach $\Prev{\cT}$. Without loss of generality, we can assume that $\ell$ is the first time when $\tau$ reaches $\Prev{\cT}$ (otherwise we will look at the prefix of $\tau$ that satisfies this).
		Assume by way of contradiction that $\tau$ has less than $b^5\cdot \P$ cycles with $\es_j$ for $j< i$, we show that $\ell$ must fall after $\segStart_{i+1}(v)$.
		
		Since there are at most $b^5\cdot \P-1$ cycles with $\es_j$ for $j\le  i$, each of length at most $b$, then the total length of $\tau$ spent in such cycles and in the simple paths of $\tau$ (of total length at most $b$) is $X\le b(b^5\cdot \P-1)+b=b^6\cdot \P$.
		The effect accumulated by the transitions in $X$ is at least $-X$, if all the relevant transitions are $-1$.
		
		The remaining part of the path, whose length is denoted by $Y=\ell-X$, is spent in cycles with $\es_j$ for $j> i$.
		Note that its effect is at least $(v-X-\Prev{\cT})$, and therefore its length $Y \geq \frac{-1}{\es_{i+1}}(v-X-\Prev{\cT})$.
		
		Observe that any actual path $\tau$ that satisfies the required constraints cannot be shorter than the above theoretical path in which there are $X$ transitions of effect $(-1)$.
		
		Therefore, we have $\ell\geq X + Y \geq b^6\cdot \P + \frac{-1}{\es_{i+1}}(v-b^6\cdot \P-\Prev{\cT}) = 
		\frac{-1}{\es_{i+1}}(v-\Prev{\cT}) + (\frac{-1}{\es_{i+1}}-1) b^6\cdot \P \geq  \frac{-1}{\es_{i+1}}(v-\Prev{\cT}) - b^7\cdot\P$.
		
		Recall, however, that $\ell< \segStart_{i+1}(v)=\frac{-1}{\es_{i+1}}(v-\Prev{\cT})-b^8\cdot \P$, leading to a contradiction.	
	\end{enumerate}
\end{proof}
\subsection{Remaining Cases in the Proof of \cref{cl:Periodicity}}
We present remaining arguments for the proof. 
\subsubsection{Proof of \cref{cl:Periodicity}.\cref{item:SegmentPeriodicityLengthening}}

Consider a length $\ell\notin \core(v)$, and let $\tau$ be a basic path of length $\ell-\P$ such that 
$(s,v)\leads{\ell-\P}(e,u)$. 
We construct from $\tau$ a path $\tau'$ for  $(s,v)\leads{\ell}(e,u')$, such that $u\equiv u'$.

This part of the proof, showing that a path can be properly lengthened by $\P$, is partially analogous to part \cref{item:SegmentPeriodicityShortening} of the proof, which shows that a path can be shortened by $\P$. 
We split the proof to the same cases as in part \cref{item:SegmentPeriodicityShortening}, and whenever the proofs are similar enough, we refer to the corresponding case in part \cref{item:SegmentPeriodicityShortening}.
\paragraph*{Case 1b.1: The counter values in $\tau$ stay above $\Prev{\cT}$}

Since there is no position in $\tau$ with counter value $\Prev{\cT}$, then in particular $\tau$ has no zero-transitions. We split to subcases.

\begin{itemize}
	\item 	If there is a non-negative cycle $c$ in $\tau$, we can obtain $\tau'$ by adding $\frac{\P}{|c|}$ copies of $c$, having that the counter values in $\tau'$ are at least as high as the corresponding ones in $\tau$.
	Observe that $\tau'$ is of length $\ell+\P$ from $(s,v)$ to $(e,u')$ with $u'=u+\effect{c}\frac{\P}{|c|}$. Since we have $u'\ge u\ge \Prev{\cT}$, then $u\equiv u'$.
	\item 	If there are in $\tau$ two cycles of different slopes, such that each of them repeats at least $b\cdot\P$ times, we can use \cref{cl:CycleCombinationForElr} to properly lengthen $\tau$ into $\tau'$.
	\item 	Otherwise, we are in the case that only a single cycle $c$ with some negative slope $\es$ repeats at least $b\cdot\P$ times. Since $\ell-\P>\segStart_i(v)+\frac{\sT}{2}$, by \cref{cl:SegmentsCycles}.\cref{Item:SlowCycles}, we have $\es>\es_i$.
	We can thus add $\frac{\P}{|c|}$ copies of $c$, guaranteed that the counter values in $\tau'$ remain above $\Prev{\cT}$.
	Indeed, the maximal drop (i.e., the negation of the minimum cumulative counter effect in the path) $D$ of $\tau$ is $X+Y$, where $X$ stands for the repetitions of $c$, and $Y$ for the rest. We have $X\leq -\es\cdot(\ell-\P)$ (notice that $\es$ is negative) and $Y\leq b+ (b-1)b^2\P$, considering the worst case in which all transitions out of cycles, as well as in the cycles that are not $c$ are of effect $(-1)$. Thus, $D \leq -\es\cdot(\ell-\P) + b^3\P$.
	
	As $\ell-\P<\ell<\segStart_{i+1}(v)=\frac{(v-\Prev{\cT})}{-\es_{i+1}}-b^8\cdot\P$, we have
	$D \leq \frac{\es}{\es_{i+1}}(v-\Prev{\cT}) + b^3\cdot\P - (-\es)b^8\cdot\P$.
	Since $\es>\es_i$, it follows that $\es\geq\es_{i+1}$, thus (as both $\es$ and $\es_{i+1}$ are negative), we have
	$D \leq (v-\Prev{\cT}) + b^3\cdot\P - (-\es)b^8\cdot\P$.
	
	Recall that we aim to show that $v-D$ can ``survive'' a drop of up to $\P$ additional copies of $c$ (i.e., at most $b\cdot\P$), while keeping the counter value above $\Prev{\cT}$. Thus, we need to show that $v - ((v-\Prev{\cT}) + b^3\cdot\P - (-\es)b^8\P) - b\cdot\P > \Prev{\cT}$.
	It is left to show that $ (-\es)b^8\cdot\P > (b^3+b)\cdot\P$, which obviously holds, as $-\es\geq\frac{1}{b}$.	
\end{itemize}

\paragraph*{Case 1b.2: $\tau$ reaches counter value $\Prev{\cT}$}

Let $0\le z_f\le z_u\le \ell-\P$ be the first and ultimate positions in $\tau$ where the counter value is exactly $\Prev{\cT}$. 
We split $\tau$ into three parts: $\tau_1 = \tau[0~..~z_f), \tau_2=\tau[z_f~..~z_u), \tau_3=\tau[z_u~..~\ell-\P]$ as in the analogous case in~\cref{item:SegmentPeriodicityShortening}.

\begin{enumerate}
	
	\item \emph{The middle part  $\tau_2=\tau[z_f~..~z_u]$ or last part $\tau_3=\tau[z_u~..~\ell-\P]$ are of length at least $b^8\cP$.}
	These cases are analogous to their counterparts \cref{item:LongMiddlePart} and \cref{item:LongLastPart} of \cref{item:SegmentPeriodicityShortening}.2, by adding cycle repetitions instead of removing them.
	
	%
	%
	%
	
	\item {\it Only the first part $\tau_1=\tau[0~..~z_f)$ of length at least $b^8\cP$}\label{item:LongFirstPart1b}
	
	
	If any of the other parts is long, we lengthen them.
	Otherwise, their combined length is less than $2b^8\cdot\P<\frac{\sT}{2}$, implying that the first part $\tau_1$ is longer than $\segStart_i(v)+\frac{\sT}{2}$.
	
	Hence, by \cref{cl:SegmentsCycles}, there are `fast' and `slow' cycles $c_f$ and $c_s$, respectively, of slopes $\es_f<\es_s$, such that each of them repeats at least $b^4\cdot\P$ times in $\tau_1$.
	
	Thus, by \cref{cl:CycleCombinationForLength}, we can add and/or remove some repetitions of $c_f$ and $c_s$, such that $\tau_1$ is longer by exactly $\P$.
	Yet, we should ensure that $\tau'$ is  valid, in the sense that its corresponding first part $\tau'_1$ cannot get the counter value to $0$.		
	We show it by cases:
	\begin{itemize}
		\item If $c_f$ or $c_s$ are balanced, we can lengthen  without changing the remaining counters.
		\item If there is a positive cycle $c_{+}$ that repeats at least $2b^2\cdot\P$ times, then the counter value climbs by at least $2b^2\cdot\P$ from its value $v_x$ at position $x$ where $c_{+}$ starts and the position $y$ where its repetitions end. 
		As the counter gets down to $\Prev{\cT}<\P$ at the end of $\tau_1$, and $\Prev{\cT}<P$ (\cref{cl:PeriodBiggerThanPreviousThreshold}), there must be a negative cycle $c_{-}$ that repeats at least $b\cdot\P$ times between position $y$ and the first position after $y$ that has the counter value $v_x+b^2\cdot\P$. Hence, we can add repetitions of $c_{+}$ and $c_{-}$, as per \cref{cl:CycleCombinationForLength}, ensuring that the only affected values are above $v_x$.
		\item Otherwise, both $c_f$ and $c_s$ are negative, implying that we add some repetitions of $c_s$ and remove some repetitions of $c_f$. We further split into two subcases:
		\begin{itemize}
			\item If $c_f$ appears before $c_s$ then there is no problem, as the only change of values will be their increase. 
			\item If $c_s$ appears first, ending at some position $x$, while $c_f$ starts at some position $y$, then a-priori it might be that repeating $c_s$ up to $b\cdot\P$ more times, as per \cref{cl:CycleCombinationForLength}, will take the counter value to $0$. 
			
			Yet, observe that since there are at most $b-2$ positive cycles, and each of them can repeat at most $2b^2\cdot\P-1$ times, the counter value $v_x$ at position $x$, and along the way until position $y$, is at least $v_y-(b-2)2b^2\cdot\P$, where $v_y$ is the counter value at position $y$. As $c_f$ repeats at least $b^4\cdot\P$ times, we have $v_y\geq b^4\cdot\P$. Thus $v_x \geq b^4\cdot\P - 2(b-2)b^2\cdot\P > b^2\P$. Hence, repeating $c_s$ up to $b\cdot\P$ more times at position $x$ cannot take the counter value to $0$, until position $y$, as required.
		\end{itemize}
	\end{itemize}	
\end{enumerate}		

\subsubsection*{Continuation of the Proof of \cref{cl:Periodicity}.\cref{item:CoreShiftShortening}}
	We detail the subcases of Case 2a.2.2 that were not detailed in the main text:
	\begin{itemize}
		\item If there is such a relatively slow cycle $c_s$ that ends in a position $x$ before a position $y$ in which the first really fast cycle $c_f$ starts, then:
		If $\es_s=s_i$, we can simply remove $\frac{\P}{-\es_s \cdot |c_s|}$ repetitions of $c_s$, getting the desired counter and lengths changes.
		
		Otherwise, namely when $\es_s>s_i$, we claim that the counter value between positions $x$ and $y$ is high enough, allowing to remove some repetitions of $c_s$ and $c_f$, as per  \cref{cl:CycleCombinationForElr}. 
		As $c_s$ might be positive, removal of its repetitions might decrease the counter; the worst case is when all its transitions are of effect $(+1)$, and we shorten the path the most, which is bounded by $b\cdot\P$. To ensure that the counter remains above the value $P$ after the removal and until position $y$, we thus need it to be at least $b\cdot\P+\P$ at all positions until $y$.
		
		Indeed, the path has only cycles of slope at least $\es_{i+1}$ until position $y<\segStart_{i}+\sT$. Therefore, its drop $D$ is at most $b$, for the non-cyclic part, plus $(-\es_{i+1})(\segStart_{i}(v+\P)+\sT)$ for the cyclic part. (Notice that $\es_{i+1}$ is negative.)
		Therefore, $D \leq b + (-\es_{i+1})(\segStart_{i}(v+\P)+\sT)$. Taking the value of $\segStart_{i}(v+\P)$, we have 
		$D \leq b + (-\es_{i+1})( \frac{-1}{\es_{i}}(v+\P-\Prev{\cT})-b^8\cdot \P+\sT)
		= b + \frac{\es_{i+1}}{\es_{i}}(v+\P-\Prev{\cT})+(\es_{i+1})(b^8\cdot \P+\sT)$.
		Since $\es_{i+1}\leq\frac{1}{b}$ and $\frac{\es_j}{\es_i}\leq1-\frac{1}{b^2}$, we have 
		$D\leq b + (1-\frac{1}{b^2})(v+\P-\Prev{\cT})+\frac{1}{b}(b^8\cdot \P+\sT)
		\leq b + (1-\frac{1}{b^2})(v+\P-\Prev{\cT})+(b^7\cdot \P+b^8\cdot\P)$.
		
		Now, as the counter starts with value $v+\P$, we need to show that $E = v+\P - D -(b^2+1)\P \geq 0$.
		Indeed, $E \geq \frac{1}{b^2}(v+\P) - b^7\cdot \P- b^8\cdot\P-(b^2+1)\P$.
		As $v\geq\cT=b^{11}\cdot\P$, we have $E \geq (b^9- b^7 - b^8-b^2-1)\P>0$, as required.
		
		\item Otherwise, namely when there is no relatively slow cycle $c_s$ that repeats at least $b^4\cdot\P$ times before the position $y$ in which the first really fast cycle $c_f$ starts:
		
		Let $z$ be the position after $y$, in which the path has the smallest counter value $v_z$ before a relatively slow cycle that repeats $b^4\cdot\P$ times.

		If $v_z>\P$, we can remove some repetitions of the really fast and relatively slow cycles, as per \cref{cl:CycleCombinationForElr}; We only remove repetitions of the fast decreasing cycle, so as a result the counter will only grow, and if it is guaranteed to be above $\P$ when starting the path from counter value $v+\P$, it will not reach $0$ when starting from counter value $v$.
		
		Otherwise, namely when $v_z\leq\P$, we can remove $\frac{\P}{-\es_f \cdot |c_f|}$ repetitions of the really fast cycle $c_f$, guaranteeing that $\tau'_1$ remains above $0$ until position $y$.
		
		Observe, however, that the resulting path $\tau'$ will be longer than $\ell$. 
		Nevertheless, we claim that in this case, the portion of $\tau_1$ from position $z$ until its last position with counter value $v_z$ can be further shorten without changing the path's effect. Indeed, there is a cycle that repeats at least $b^4\cdot\P$ times in this part, having an absolute effect of at least $b^3\cdot\P$, implying that this part of $\tau_1$ climbs up to a value of at least  $b^3\cdot\P$, and down again, allowing for the removal of positive and negative cycles, as per \cref{cl:CycleCombinationForLength}, analogously to \cref{item:LongMiddlePart} in the proof of \cref{cl:Periodicity}.\cref{item:SegmentPeriodicityShortening}.2.
	\end{itemize}

\subsubsection{Proof of \cref{cl:Periodicity}.\cref{item:CoreShiftLengthening}}
We remark that Segment $0$ is identical in the shortening and lengthening argument, and this is a repetition of the main text. 
For a path of length $\ell$, we have in Segment $0$ that $\shift(\ell)=\ell$, and indeed a path from $v$ is valid from $v+\P$ and vise versa, as they do not hit $\Prev{\cT}$: Their maximal drop is $\sT$, while $v\geq\cT>\P+\sT>\Prev{\cT} + \sT$.

We turn to the $i$th segment, for $i\geq 1$.
Consider a basic path $\tau$ for $(s,v)\leads{\ell}(e,u)$. Recall that  $\shift(\ell)=\ell+\frac{\P}{-\es_i} \in [\segStart_i(v+\P) ~..~ \segStart_i(v+\P)+\sT)]$. 
We construct from $\tau$ a path $\tau'$ for  $(s,v+\P)\leads{\shift(\ell)}(e,u')$, such that $u\equiv u'$, along the following cases.

\paragraph*{Case 2b.1: The counter values in $\tau$ stay above $\Prev{\cT}$}

Since there is no position in $\tau$ with counter value $\Prev{\cT}$, then in particular $\tau$ has no zero-transitions. 
Since $\ell>\segStart_{i-1}(v)+\frac{\sT}{2}$, by \cref{cl:SegmentsCycles}.\cref{Item:SlowCycles}, there is a cycle $c$ in $\tau$ of slope $\es>\es_{i-1}$, namely of slope $\es\geq\es_i$.
We can thus get the required path $\tau'$, by adding $\frac{\P}{-\es \cdot |c|}$ repetitions of $c$. Indeed, the counter value of $\tau'$ at the position that $c$ starts will be bigger by $\P$ than the counter value of $\tau$ at that  position, while the counter of $\tau'$ after the repetitions of $c$ will be at least as high as the counter value of $\tau$ at the corresponding position.

\paragraph*{Case 2b.2: $\tau$ reaches counter value $\Prev{\cT}$}

Let $0\le z_f\le z_u\le \ell$ be the first and ultimate positions in $\tau$ where the counter value is exactly $\Prev{\cT}$. We split $\tau$ into three parts: $\tau_1 = \tau[0~..~z_f), \tau_2=\tau[z_f~..~z_u), \tau_3=\tau[z_u~..~\ell]$ (it could be that $z_f=z_u$, in which case the middle part is empty).

In order to accommodate with possible \ZTs, we should lengthen $\tau_1$, such that the resulting first part $\tau'_1$ of the new path $\tau'$, which starts with counter value $v+\P$, will also end with counter value exactly $\Prev{\cT}$. 
Since $\tau_1$ reaches $\Prev{\cT}$ and it is shorter than $\segStart_{i+1}(v)$, it has by \cref{cl:SegmentsCycles}.\cref{Item:FastCycles} at least $b^4\cdot\P$ repetitions of a `fast' cycle $c_f$ of slope $\es_f\leq \es_i$. 

\begin{enumerate}
	\item
	If $\es_f=\es_i$ or $\tau_2$ or $\tau_3$ are of length at least $b^5\cdot\P$, we can add $\frac{\P}{-\es_f \cdot |c_f|}$ repetitions of $c_f$ in $\tau_1$. Note that the resulting first part $\tau'_1$ of the new path $\tau'$ indeed ends with counter value $\Prev{\cT}$. However, while when  $\es_f=\es_i$  the resulting length of $\tau'$ will be $\shift(\ell)$, as required, in the case that $\es_f<\es_i$, we have that the resulting path $\tau'$ will be shorter than $\shift(\ell)$. Nevertheless, in this case, as $\tau_2$ or $\tau_3$ are of length at least $b^5\cdot\P$, we can further lengthen $\tau_2$ or $\tau_3$ without changing their effect, as per \cref{cl:CycleCombinationForLength}, analogously to \cref{item:LongMiddlePart} or \cref{item:LongLastPart}, respectively, in the proof of \cref{cl:Periodicity}.\cref{item:SegmentPeriodicityShortening}.2.
	
	\item
	Otherwise, we are in the case that $\tau_1$ has a `really fast' cycle of slope $\es_f< \es_i$ that repeats at least $b^4\cdot\P$ times, and both $\tau_2$ or $\tau_3$ are of length less than $b^5\cdot\P$.
	We claim that in this case $\tau_1$ must also have $b^4\cdot\P$ repetitions of a `relatively slow' cycle $c_s$ of slope $\es_s\geq \es_i$. The argument for that is the same as in \cref{Item:FastSlowInPartOne2a} of  part \cref{item:CoreShiftShortening} of the proof.
	As the case of $\es_s= \es_i$ is handled in the previous subcase, we may further assume that $\es_s> \es_i$.
	We may thus add repetitions of $c_f$ and $c_s$, as per \cref{cl:CycleCombinationForElr}, lengthening $\tau_1$ by exactly $\shift(\ell)-\ell$, while also ensuring that it ends with counter value exactly $\Prev{\cT}$.
	
\end{enumerate}		

\end{document}